\newtheorem{thm}{Theorem}[section]
\newtheorem{lem}[thm]{Lemma}
\newtheorem{prop}[thm]{Proposition}
\theoremstyle{definition}
\newtheorem{defn}[thm]{Definition}
\theoremstyle{remark}
\newtheorem{rem}[thm]{Remark}
\numberwithin{equation}{section}
\newcommand{\set}[1]{\left\{#1\right\}}
\newcommand{\Real}{\mathbb R}
\newcommand{\Natural}{\mathbb N}
\newcommand{\such}{\ | \ }
\newcommand{\prob}{\mathbb{P}}
\newcommand{\expec}{\mathbb{E}}
\newcommand{\cov}{\mathbb{C} \mathsf{ov}}
\newcommand{\ud}{\mathrm d}
\newcommand{\inner}[2]{\left \langle #1 , #2 \right \rangle}
\newcommand{\hil}{\mathcal{X}}
\newcommand{\hili}{{\hil_i}}
\newcommand{\hilsub}{{\mathcal{Y}}}
\newcommand{\bB}{\mathbf{B}}
\newcommand{\bF}{\mathbf{F}}
\newcommand{\bH}{\mathbf{H}}
\newcommand{\bW}{\mathbf{W}}
\newcommand{\bX}{\mathbf{X}}
\newcommand{\bY}{\mathbf{Y}}
\newcommand{\bZ}{\mathbf{Z}}
\newcommand{\bbL}{\mathbb{L}}
\newcommand{\bbU}{\mathbb{U}}
\newcommand{\id}{\mathsf{id}}
\newcommand{\defp}{\mathbb{S}_{\succeq}}
\newcommand{\defpp}{\mathbb{S}_{\succ}}
\newcommand{\defppi}{\defpp^{\hil_i}}
\newcommand{\pdefp}{\prod_{\iii} \defp}
\newcommand{\pdefppi}{\prod_{\iii} \defppi}
\newcommand{\diag}{\mathsf{diag}}
\newcommand{\argmax}{\operatorname{argmax}}
\newcommand{\limn}{\lim_{n \to\infty}}
\newcommand{\pare}[1]{\left(#1\right)}
\newcommand{\bra}[1]{\left[#1\right]}
\newcommand{\dbra}[1]{[\kern-0.15em[ #1 ]\kern-0.15em]}
\newcommand{\dbraco}[1]{[\kern-0.15em[ #1 [\kern-0.15em[}
\newcommand{\dbraoc}[1]{]\kern-0.15em] #1 ]\kern-0.15em]}
\newcommand{\X}{\mathcal{X}}
\newcommand{\dfn}{\mathrel{\mathop:}=}
\newcommand{\nin}{n\in\Natural}
\newcommand{\iii}{i\in I}
\newcommand{\proj}{\pi}
\newcommand{\proji}{\proj_i}
\newcommand{\projj}{\proj_j}
\newcommand{\cop}{p^{\text{co}}}
\newcommand{\coq}{q^{\text{co}}}
\newcommand{\hilsubi}{{\mathcal{Y}_i}}
\newcommand{\defppsub}{\defpp^{\hilsub}}
\newcommand{\defppsubi}{\defpp^{\hilsub_i}}
\newcommand{\pdefppsubi}{\prod_{\iii}\defppsubi}
\begin{document}

\title[Price Impact Under Heterogeneous Beliefs and Restricted Participation]{Price Impact Under Heterogeneous Beliefs and Restricted Participation}
\author{Michail Anthropelos}
\address{Michail Anthropelos, Department of Banking and Financial Management, University of Piraeus.}
\email{anthropel@unipi.gr}

\author{Constantinos Kardaras}
\address{Constantinos Kardaras, Statistics Department, London School of Economics.}
\email{k.kardaras@lse.ac.uk}

\thanks{The authors acknowledge helpful feedback from Andrea Buffa, Scott Robertson, Jean-Charles Rochet, Marzena Rostek, Paul Schneider and Xavier Vives. 
}
\keywords{thin markets; restricted participation; constrained participation; price impact; risk sharing; Nash equilibrium; heterogeneous beliefs; disagreement on second moments.}

\date{\today}%
\begin{abstract}
We consider a financial market in which traders potentially face restrictions in trading some of the available securities. Traders are heterogeneous with respect to their beliefs and risk profiles, and the market is assumed thin: traders strategically trade against their price impacts. We prove existence and uniqueness of a corresponding equilibrium, and provide an efficient algorithm to numerically obtain the equilibrium prices and allocations given market's inputs. We find that restrictions may increase the market's welfare if traders have different views regarding the covariance matrix of securities returns. The latter heterogeneity regarding covariance matrix disagreement is essential in modelling; for instance, when traders agree on the covariance matrix, restricting participation in some securities for some traders leaves equilibrium prices unaltered in the unrestricted securities, a certainly undesirable model effect.
\end{abstract}

\maketitle


\section*{Introduction}\label{sec:intro}

\subsection*{Discussion}

Traders often face restrictions on investing in certain type of financial assets. Several exogenous factors may prevent retail or institutional investors from accessing certain classes of securities. Standard examples are the proprietary trading of banks, allowable investments of mutual funds and pension funds, which by regulation are not allowed to hold certain instruments (e.g., non-investment graded bonds, classes of over-the-counter derivatives, securities in private placements, etc.).\footnote{According to Volcker rule (see e.g.~\cite{Whi11}), US banks are prohibited from trading several securities, derivatives and commodities for their own profit. Similar restrictions are imposed to European banks by Markets by the Financial Instruments Directive II---see e.g.~\cite{Bus17}).

Restrictions on the asset classes that regulated mutual funds are allowed to trade stem from two reasons. First, restrictions arise from the category that the mutual fund belongs; for example, a money market mutual fund is not allowed to invest in long-term bonds or equity, and a government bond mutual fund is not allowed to invest in securities issued by corporations. Second, registered mutual funds are generally not allowed to trade asset classes like private equity and commodities, which can be traded by other institutional investors; see~\cite{SEC16}, as well as~\cite{FulHon21}.

For regulatory investment restrictions for pension funds, we refer to the survey~\cite{OECD19}.}
Even for hedge funds, restrictions may stem from inflexible investment statement policies, whereby managers choose not to trade in some securities in order to emphasise the speciality of their investment strategies; see, among others,~\cite{Bla13}. 

Institutional investors may additionally avoid certain securities due to asymmetrically high transaction costs and margin requirements~\citep{KoiYog19}, or due to the difficulty in processing information related to assets' valuations---see~\cite{CorGop10} and the references therein. Such exogenous restrictions imply that some traders (such as mutual funds) cannot sufficiently diversify or hedge their risks, while their counterparties (such as hedge or private funds) can; a situation that affects the market's competitiveness and hence its risk-sharing efficiency. 

When several institutional traders are not allowed to trade some securities, large unrestricted traders gain significant market power to impact allocations and prices. In the financial literature, markers dominated by few large traders are usually called \emph{thin}~\citep{Rostek2016}. Such markets become non-competitive solely due to a small number of participants possessing market power. Market's non-competitiveness does not stem from asymmetric information or asymmetric exogenously-imposed bargaining power; rather, the structure of the security market is oligopolistic, where all traders can buy or sell the tradeable assets under a uniform-price double auction setting.

In markets with restricted participation, the assumption that traders are price-takers becomes particularly problematic, and ignoring traders' price impact is not consistent with observable practice. Furthermore, under participation restrictions, price impact has an additional component: a trader's strategy should take into account not only their, but also their counterparties' restrictions, and traders' actions also impact securities they are restricted from trading. In other words, traders recognise the impact they have on all equilibrium prices, thereby acting strategically, resulting in price impacts endogenously derived in equilibrium.

Even though institutional investors, whose participation is restricted, are some of the largest\footnote{In~\cite{BreSchSenSha20}, it is estimated that roughly 50\% of corporate bonds hold by institutional investors, while a recent study by~\cite{PI17} estimated that 80\% of equity markets is owned by institutional investors---see also the related discussions in~\cite{AskFarLji14} and~\cite{KoiYog19}.} protagonists of the financial markets, theoretical studies of equilibria with asymmetric restricted participation in markets with few large participants are rare. \emph{Our initial goal is to consider a non-competitive equilibrium model} (as the one in~\cite{RosWer15} and~\cite{MalRos17}) \emph{under a general structure of traders' restricted participation and prove the existence and uniqueness of such equilibrium}.

Besides asymmetric participation, traders' heterogeneity substantially impacts strategic behaviour. Recent theoretical models on price impact, such as~\cite{MalRos17},~\cite{Anth17},~\cite{BabPart19} and~\cite{AnthrKardVich20}, predict that traders' heterogeneity plays a crucial role, not only on the distribution of trading gains, but also on the market's total welfare. In the view of these insights, we choose to consider a model that is in line with the classic CARA-normal setting, with traders being heterogeneous on all of their characteristics, i.e., initial risky endowments, risk aversions and beliefs. 

Traders' deviations on risk aversion, initial endowment and payoff's expectations have been studied in both competitive and non-competitive security market models~\citep{RosYoo20}; however, herein we additionally consider traders with different views on the covariance matrix of the tradeable assets. This adds an extra non-trivial layer of analysis with respect to related literature, where the effect of different second moments on non-competitive equilibrium transactions has not been addressed.

It is rather unrealistic to assume that traders agree on the (co)variances of tradeable assets---see also the related argument in~\cite{Duc10}.\footnote{Difference on beliefs about (co)variances could also be supported by the large volume on derivatives whose underlying asset is realised variance, such as variance swaps. Although some traders of such assets are simply hedgers, a large number of them trade these products based on personal prediction or estimation regarding volatility and correlations of the associated securities---see~\cite{Bak15}.} In this manuscript, we aim to extensively study \emph{how this demonstrated traders' general disagreement on the second moments effects equilibrium prices, allocations, impacts and welfare}. Such additional heterogeneity on trader's beliefs is not included only for the sake of theoretical generalisation. It turns out that  deviation on beliefs regarding second moments of payoffs is an important factor that affects both equilibrium price impacts, as well as the induced market's efficiency. In particular, an indicative example demonstrates that if traders disagree simply on the correlation of payoffs, then participation's restrictions may increase welfare.\footnote{Under competitive market models, the effect of traders' disagreement about the asset's second moments on the equilibrium has already been highlighted. For example,~\cite{Duc10} show that disagreement about (co)variances heavily affects prices, while~\cite{Bak15} connect heterogeneity in beliefs with trading of contingent claims written on volatility.}

Recapitulating, we consider a model that includes three quite common market's features, all heavily affecting the formulation of the equilibrium transaction: (i) possible restricted participation of some traders for some securities; (ii) market's thinness, stemming solely from oligopolistic structure; and (iii) general traders' heterogeneity and especially to regards on their beliefs on the securities' second moments. To the best of our knowledge, this is the first work that simultaneously includes the combination of these three features.

\subsection*{Contributions}
 
We consider a static model and a finite number of traders under the standard CARA-normal setting as, for instance, in~\cite{Vay99} and~\cite{Viv11}, and study equilibrium pricing and allocations of securities, where traders can take both long and short positions. Under potential restricted participation, motive for trading stems from the heterogeneity in the traders' risky existing positions, risk profiles and beliefs.  

We deviate from a price-taking framework; equilibrium forms within a game played among traders, with all of them realising that their actions impact prices. We model market's operation as a uniform-price double auction and traders' strategic sets are the downward-sloping demand schedule they submit in the transaction, following the tradition of~\cite{Kyl89}. 

Under the aforementioned setting, Theorem~\ref{thm:nash} establishes existence and global uniqueness of a non-competitive Nash equilibrium. (The only additional imposed assumption is that at least three traders participate in the transaction of each security. The latter is necessary for equilibrium to exist; see~\cite{Kyl89}.)
To our knowledge, the present work is the first instance where existence and uniqueness of non-competitive equilibrium is shown under restricted participation and such extensive traders' heterogeneity. 

An algorithm that numerically calculates the equilibrium quantities is also provided. Besides its obvious computational value, this iterative procedure is economically motivated too, as it can be seen as a mechanism where traders update their best responses in a Walrasian type of auction. This procedure converges to the associate fixed point, as the Walrasian auction converges to its equilibrium.

Traders' disagreement on the covariance matrix brings interesting consequences on equilibrium transactions, that other forms of heterogeneity cannot yield. While both the covariance matrix and risk aversion are part of the risky component of traders' demand function, there are important effects in equilibrium that only heterogeneity on the covariance matrix unveils. In Theorem~\ref{thm:equal_prices}, we show that when traders agree on the covariance matrix, the equilibrium price of the unrestricted assets remains unchanged when restrictions to other assets are imposed. This is an unrealistic feature when assets are correlated and traders have price impact; traders' strategies, in principle, should take into account the pool of the assets that they are allowed to trade and hence restrictions should affect the equilibrium of all the market. This unreasonable result does not hold when traders disagree on the covariance matrix. Under such disagreement, we further show that equilibrium price impacts are \emph{not} necessarily monotone (in positive-semi-definite order) with respect to traders' covariance matrices, in contrast to the case of common trader beliefs, where there is monotonicity between trader's risk tolerance and price impact~\citep[Theorem 2]{MalRos17}. Lower estimated variances \emph{do not} necessarily imply higher price impact.

Of further importance is the effect of the discrepancy of covariance matrices on the market's welfare. We argue that lifting participation restrictions in a non-competitive market may \emph{not} be socially beneficial, via an indicative example where traders disagree on the covariance of the security payoffs. This comes in sharp contrast with no-price-impact competitive equilibrium, where full participation leads to Pareto optimal allocations and is always optimal, regardless of traders' heterogeneity. Boosts of efficiency through restrictions do  not necessarily stem from heterogeneity in traders' optimism; in our example, higher efficiency in restricted participation occurs when traders agree on expectations and variances of two securities, and only disagree on the correlation between them. In fact, this disagreement may result in reduction of the price impact of the trader who mostly benefits by trading. In such cases, it is possible that the total welfare decreases due to withdrawal of restrictions for that particular trader. 

It should be emphasised that different estimations on covariance matrices of tradeable assets are obviously market-specific, while traders' risk aversions (common or not) reflect trader's risk preferences in \emph{any} market. This is an important discrimination, since predictions of a model that are based on risk aversion heterogeneity in one market may be heavily violated to another when therein traders disagree on second moments. A simple and direct explanation, beyond the previous discussion, is that difference on covariances could be another source of mutually beneficial trading (for traders elsewhere homogeneous), since traders may estimate the correlation of their endowment with the tradeable asset differently.

\subsection*{Connections with existing literature}

Our work mostly relates to two strands of literature on security equilibrium pricing. On the one hand, we contribute to the ongoing research on imperfectly competitive financial markets, where traders are assumed strategic and heterogeneous; on the other hand, our results belong to the study of equilibrium models under restricted (also referred as limited or constrained) traders' participation.

Motivated by related empirical evidence, several authors have studied so-called \emph{thin} financial market models, where investors impact prices with their demands/orders; a recent and extent literature review on this strand is provided by~\cite{RosYoo20}. For the formulation of our non-competitive equilibrium model, we follow the uniform-price double-auction setting developed by~\cite{RosWer15}, where, in the spirit of~\cite{Kyl89} and~\cite{Vay99} and~\cite{Viv08}, traders act strategically in a demand submission game, trading against residual supply.~\cite{RosWer15} treat a dynamic version of this game, assuming however that traders have the same beliefs and risk aversion. The same Nash equilibrium setting is considered in~\cite{MalRos17}, where the focus is on the effect of such strategic behaviour when traders with potentially different risk aversions trade securities in decentralised markets.\footnote{Demand submission games in fragmented markets operating under double auctions are also studied in~\cite{CheDuff21},~\cite{RosYoo21a} and~\cite{Witt21}; however, therein demand schedules are not contingent on prices.}~\cite{RosYoo23} consider similar non-competitive equilibrium under incomplete demand conditioning. In all these models, traders are assumed to have same estimations for the (co)variances of the tradeable securities. In this paper, we highlight the importance that heterogeneity in securities' covariance matrices has on equilibrium price impacts and gains from trading, showing that it is not just a theoretical generalisation, and emphasising that the induced effects cannot be captured by heterogeneity on risk aversion.

There is also large literature on thin financial markets with exogenously given price-impact functions for each strategic trader---e.g.,~\cite{AlmCh00},~\cite{AlmThu05} and~\cite{HubSta04}. We deviate from this literature; similar to~\cite{Kyl89},~\cite{RosWer15} and~\cite{Viv11}, price impact in our model is derived endogenously as part of the market's equilibrium.

Additionally to heterogeneity of beliefs, and instead of decentralised markets or incomplete demand, we focus on \emph{restricted participation}. In our setting, we establish existence and \emph{global} uniqueness of Nash equilibrium; in contrast, uniqueness in the decentralised market setting has only been shown to hold locally. 

Restricted access may be regarded as a constraint on the traders' set of admissible 
portfolios. Among others, equilibrium models of financial markets that impose constrained sets have been studied by~\cite{Duff87} and~\cite{PolSic97}, by~\cite{BasCuo98} in a continuous-time setting, and more recently by~\cite{Cass06},~\cite{HenHerPre06},~\cite{AouCor09},~\cite{CorGop10}. All these papers consider competitive markets\footnote{Under competitive market structure, participation restrictions may arise endogenously---see, among others,~\cite{CarGorVil09} and~\cite{CavGonSod04}.},  where traders are essentially price-takers; price impact during transactions and its implication on market's efficiency are not addressed. However, several empirical studies---for instance,~\cite{KoiYog19},~\cite{HuHofLanTim19},~\cite{AllMatMat17},~\cite{FraIsrMosk18} and the related discussions in~\cite{NeuSoc20} and~\cite{RosYoo20}---have shown that large institutional investors comprise a significant part of the market's volume, with orders influencing security prices and eventually the portfolio allocations of all traders. Under a restricted participation setting, price impact has an additional dimension, since traders, when acting strategically, influence even transactions of assets they do not trade. 

Further to including price impact within a restricted participation setting, we show how different beliefs on payoffs' covariance matrix heavily affect the consequences of such restrictions to prices and the market's efficiency. In particular, and in contrast to the comparison between decentralised and centralised markets, restricted participation may result in higher efficiency even if traders have the same risk tolerance.\footnote{It is shown in~\cite{MalRos17} that centralised markets are always socially better than decentralised ones, when traders have the same risk aversion. We show here that this is not necessarily true when traders disagree on assets' second moments. The importance of strategic investors' heterogeneous beliefs is also highlighted in~\cite{BabPart19}. Therein, trading is done through intermediaries, and it is shown that when investors' disagreement is low, a fragmented market structure may arise endogenously at equilibrium; however, in centralised markets investors' welfare is always higher than in fragmented settings. Cases where fragmentation structure is beneficial occur under the equilibrium model of~\cite{CheDuff21} and~\cite{Witt21}, where however traders have common beliefs and traders' demands are not price contingent.} From this perspective, our work is also connected with the broader literature on financial market design, and especially on the effects of imposing or withdrawing participation's restrictions to certain market participants. Since equilibrium existence and uniqueness under any arbitrary restriction setting is established, we argue that difference of beliefs on second moments is a crucial factor that should be taken into account when analysing market's design, and especially participation rules.

Finally, there is a further link of our model with literature that considers asymmetric market power and segmented markets. For example,~\cite{TuckVi92},~\cite{RahZig09},~\cite{Zig04, Zig06} consider financial markets with restricted participation where market participants are distinguished to arbitrageurs and competitive investors; arbitrageurs have access to all tradeable assets and act strategically in a Cournot-type of framework, while investors are assumed price-takers. In our model, \emph{all} traders act strategically, making it more appropriate when large investors know that they can influence the market, even if they are restricted to trade only a subset of the securities.

\subsection*{Structure of the paper}

Section~\ref{sec:main} introduces the price-impact model and states the existence and uniqueness result. Section~\ref{sec:exampleSocialIneff} is dedicated to the importance of the beliefs' heterogeneity on the price impact and welfare issues. We conclude at Section~\ref{sec:conclu}, whereas all proofs are provided in Appendix~\ref{appsec:proofs}.

\section{Equilibrium Price Impact with Restricted Participation}\label{sec:main}

\subsection{Traders, securities and notation}
In the market, we consider a finite number of traders and use the index set $I$ to denote them. There is a finite number of tradeable risky securities, and their index set is denoted by $K$. Traders \emph{may} be restricted to trade some of the risky assets: this is modelled by assuming that trader $\iii$ has access to (effectively, is allowed to trade in) only a subset $K_i \subseteq K$ of the securities. In other words, trader $\iii$ may select units of securities in the subspace of $\hil \equiv \Real^K$ defined via
\[
\hili \dfn \set{x\in\hil \such x_j = 0, \ \forall \, j\in K \setminus K_i}, \quad \iii.
\]
We shall call \textbf{full participation} the market setting for which $K_i=K$, for all $\iii$. 
Before giving more details of the model's structure, we need to establish some necessary definitions and notation.  
For each $\iii$, we shall denote by $\proji$ the projection operator from $\hil$ on the space $\hili$: for $x\in\hil$, $\proji x$ has the effect of keeping all coordinate entries of $x$ corresponding to $K_i$ intact, while setting all coordinate entries of $x$ corresponding to $K \setminus K_i$ equal to zero. We also define $\defp$ as the set of all linear, symmetric, nonnegative-definite forms on $\hil$. On $\defp$, we define the partial order $\preceq$ via
\[
A \preceq B \quad \Longleftrightarrow \quad (B - A)\in\defp.
\]
Furthermore, for a subspace $\hilsub$ of $\hil$, let $\defpp^\hilsub$ consist of $A\in\defp$ such that $A x = 0$ for all $x\in\hil$ orthogonal to $\hilsub$, and which are strictly positive definite on $\hilsub$: if $y\in\hilsub$ is such that $\inner{y}{A y} = 0$, then $y = 0$.\footnote{Throughout the paper, $\inner{\cdot}{\cdot}$ denotes standard Euclidean inner product.} Under this notation, $B\in\defppi$ can be regarded as a $K \times K$ matrix where only the elements $B_{j \ell}$ with $(j, \ell)\in K_i \times K_i$ may be nonzero. Also note that for all $B\in\defppi$, it holds that $\proji B = B = B \proji$. For $A\in\defpp^\hilsub$ and $B\in\defpp^\hilsub$, we write $A \prec_{\hilsub} B$ to mean that $(B - A)\in\defpp^\hilsub$. Furthermore, for $B\in\defpp^\hili$, we shall denote by $B^{- \hili}$ the unique element of $\defpp^\hili$ which, on $\hili$, coincides with the unique inverse of $B$; in other words, in order to compute $B^{- \hili}$, we consider the inverse of the $K_i\times K_i$ sub-matrix and set the rest of the elements equal to zero.

For $k\in K$, define $I_k \dfn \set{i\in I \such k\in K_i}$ to be the set of traders that have access to trading security $k$. A minimal requirement for any  meaningful equilibrium model is that $|I_k| \geq 2$, for all $k\in K$. When we deal with price impact later on, we shall see that the slightly stronger condition $|I_k| \geq 3$, for all $k\in K$, is necessary and sufficient for existence (and uniqueness) of equilibrium.\footnote{The necessity of the latter assumption on linear Nash demand equilibria is well-known in the literature---see, for instance,~\cite{Kyl89} and~\cite{Viv11}.}

\subsection{Preferences and demand}\label{subsec:pref}

Let $S \equiv (S_k; \, k\in K)$ denote the payoff vector of all the tradeable securities, which are assumed to be linearly independent. In addition, we let $E_i$ denote the initial position (random endowment) of trader $\iii$. Note that we do not restrict $E_i$ to belong to the span of $S$. Following the related literature---see, for instance,~\cite{Kyl89},~\cite{Vay99} and~\cite{MalRos17}, we adapt the classic CARA-normal setting. More precisely, we assume that all traders are constant absolute risk aversion (CARA) expected utility maximisers and vector $(E_i, S)$ has a joint Gaussian law under the subjective probability $\prob_i$ of trader $i\in I$. Let $C_i$ be the covariance matrix under $\prob_i$ of $S$, where only the components of $K_i$ are regarded, and the other entries are equal to zero. Linear independence of securities implies that $C_i\in\defppi$, for each $\iii$. Furthermore, let $c_i\in\hili$ stand for the vector whose entry $k\in K_i$ is the covariance under $\prob_i$ between $E_i$ and $S_k$, and $f_i\in\hili$ denote the vector whose entry $k\in K_i$ is the expectation under $\prob_i$ of $S_k$. Traders' endowments and beliefs are considered private information.

The baseline utility (when there is no investing on $S$) of trader $\iii$ is set to be the random endowment's certainty equivalent, that is 
\[
u_i \dfn - \delta_i \log \expec_i \bra{\exp \pare{ - \frac{E_i}{\delta_i}}}\in\Real,
\]
where $\delta_i > 0$ is the risk tolerance of trader $\iii$. Then, a position $q\in\hili$ on the securities that trader $\iii$ is allowed to trade, leads to certainty equivalent equal to
\[
\hili \ni q \mapsto U_i (q) \equiv- \delta_i \log \expec_i \bra{\exp \pare{ - \frac{E_i +\inner{q}{S}}{\delta_i}}} = u_i +\inner{q}{f_i - (1 / \delta_i) c_i} - \frac{1}{2 \delta_i}\inner{q}{C_i q}.
\]
To ease the notation of the analysis that follows, we further define
\begin{equation}\label{eq:CARA_parameters}
g_i \dfn f_i - \frac{1}{\delta_i} c_i, \quad B_i \dfn \delta_i C^{- \hili}_i,
\end{equation}
where $B_i\in\defppi$ and $g_i\in\hili$. Therefore, trader $\iii$ has preferences numerically represented via the following linear-quadratic functional\footnote{The linear-quadratic functional~\eqref{eq:utility} is more general than the CARA-normal setting, in the sense that such functional could be the primary objective function of a trader, a  special case of which would be our present certainty equivalent under CARA-normal setting.}
\begin{equation}\label{eq:utility}
\hili \ni q \mapsto U_i (q) =  u_i +\inner{g_i}{q} - \frac{1}{2}\inner{q}{B^{-\hili}_i q}.
\end{equation}
Above, $g_i$ includes the trader's expectation $f_i$ of the payoffs, but also takes hedging needs into account: positive $c_i$ implies that selling securities tends to decrease the trader's risky exposure. Such hedging needs form a crucial part of the analysis, motivating traders to make transactions, even if they have homogeneous beliefs. The matrix $B_i^{-\hili}$ captures jointly the trader's risk tolerance level and the subjective covariance matrix of the securities. Finally, note that the certainty equivalent measures utility in \emph{monetary terms}, facilitating utility comparisons among different equilibria.
  
Emphasis should be given on the multi-level heterogeneity that is accommodated in this model. In particular, we allow:
\begin{itemize}
	\item heterogeneity in the traders' risk aversions;
	\item heterogeneity in the traders' subjective beliefs regarding the expectations and \emph{covariance} structure of the securities; and
	\item traders' initial random endowments which may not be spanned by the securities.\footnote{Under the standard setting of CARA expected utilities and Gaussian distributions, general initial random endowments have been considered also in non-competitive models of~\citep{Anth17, AnthrKardVich20}. Therein, it is emphasised that the beta of CAPM becomes a \emph{projected} beta, i.e., the beta of the projection of the trader’s random endowment onto the span of the tradeable securities.}
\end{itemize}

To the best of our knowledge, this is the first work in a non-competitive setting, with or without restricted participation, which allows all the above concurrently. As already stated in the introductory section, all the aforementioned heterogeneities are fairly reasonable, especially under price impact models where each trader's personal characteristics and beliefs affect the equilibrium transaction.

\subsection{Price impact}\label{sub:price_impact}

Under a competitive market setting, each trader $\iii$ is assumed to be a price-taker; therefore, for any given vector of security prices $p\in\hil$, the aim is the maximisation of the utility $U_i(q)-\inner{q}{p}$, over demand vectors $q\in\hili$. However, and as emphasised in the introductory section, there are  several security markets where such a price-taking assumption is problematic. Especially under a restricted participation environment as the one dealt with here, the possibility that large investors may influence the market is more intense, and the need arises to take into account the strategic behaviour of  participating traders in the market. 

We consider and analyse the concept of Bayesian Nash market equilibrium in linear bid schedules, as has appeared in~\cite{RosWer15} and~\cite{MalRos17}, among others. It is assumed that all traders are strategic, and that no noise traders or traders without price impact are involved in the transactions. We extend the one-round full participation game appearing in~\cite{RosWer15}, in that, in our setting, traders have heterogeneous preferences (both in risk aversions, as well as on expectations and  covariances), and do not necessarily have access to all the securities. On the other hand, the decentralised market of~\cite{MalRos17} has a richer structure on possible restrictions for trading, but all traders have the same subjective views on covariances of the securities. It is important to point out that, although traders may have access to only a subset of all securities, their actions will impact the equilibrium prices and allocations of all the securities.

Below, we give the line of argument for the individual trader's optimal allocation given a perceived price impact. We follow~\cite{Wer11} and~\cite{RosWer15}, assuming that traders perceive a linear price impact of the orders they submit; more precisely, a net order of $\Delta q\in\hili$ for trader $\iii$ will move prices by $\Lambda_i \Delta q$, where $\Lambda_i\in\defpp$ is the so-called \emph{price impact} (similar to Kyle's lambda~\cite{Kyl89}), and will be eventually endogenously determined in equilibrium.\footnote{The requirement that the price-impact matrix $\Lambda_i$ is symmetric is not without economic motivation. 
In fact, symmetric demand slope is consistent with the demand function of utility maximising traders---see, among others,~\cite[Chapter 3]{MasWhiGre}.} Let $\widetilde{p}\in\hil$ be a vector of \emph{pre-transaction} security prices. Under this linear price impact setting, an allocation $q\in\hili$ for trader $\iii$ will cost $\inner{q}{p} =\inner{q}{\widetilde{p} + \Lambda_i q}$, where $p = \widetilde{p} + \Lambda_i q$ will be the actual transaction security prices. This means that the post-transaction utility of trader $\iii$ will equal
\[
U_i (q) -\inner{q}{p} = u_i +\inner{q}{g_i - \widetilde{p}} - \frac{1}{2}\inner{q}{B^{-\hili}_i q} - \inner{q}{\Lambda_i q}.
\]
Given $\Lambda_i$, each trader $\iii$ wants to maximise the above utility by choosing demand vectors $q$ from the allowable subspace $\hili$. Therefore, with pre-transaction prices $\widetilde{p}$, the optimisation problem that trader $\iii$ faces is
\begin{equation}\label{eq:foc}
q_i = \underset{q\in\hili}{\argmax} \pare{\inner{q}{g_i - \widetilde{p}} - \frac{1}{2}\inner{q}{B^{-\hili}_i q} - \inner{q}{\Lambda_i q}}.
\end{equation}
Since the above maximisation problem is strictly concave on $\hili$, we may use first-order conditions for optimality, which give $g_i - \proji \widetilde{p} - B^{-\hili}_i q_i - 2 \proji \Lambda_i q_i = 0$; in other words,
\[g_i - \proji (\widetilde{p} + \Lambda_i q_i) = B^{-\hili}_i q_i + \proji \Lambda_i q_i  \quad \Longleftrightarrow \quad g_i - \proji p = (B^{-\hili}_i + \proji \Lambda_i \proji) q_i,\]
where the fact that $\proji q_i = q_i$ holds (since $q_i\in\hili$) was used. Note that the above first order conditions are consistent with~\cite[optimisation relation (5)]{MalRos17}, adjusted to our restricted participation setting. Since $\big( B^{-\hili}_i + \proji \Lambda_i \proji \big)\in\defppi$, upon defining
\begin{equation}\label{eq:slope_from_price_impact}
X_i \dfn \pare {B^{-\hili}_i + \proji \Lambda_i \proji}^{-\hili}\in\defppi, \quad \iii,
\end{equation}
and noting that $X_i \proji = X_i$, we obtain that
\begin{equation}\label{eq:price_impact_dem}
q_i =  X_i g_i - X_i p.
\end{equation}
To recapitulate: given a perceived linear price impact $\Lambda_i\in\defpp$, and with $X_i$ given by~\eqref{eq:slope_from_price_impact}, the relationship between the optimal allocation $q_i\in\hili$ of trader $\iii$ with actual transaction prices $p\in\hil$ is given by~\eqref{eq:price_impact_dem}. In view of~\eqref{eq:price_impact_dem}, the matrix $X_i\in\defppi$ of~\eqref{eq:slope_from_price_impact} has the interpretation of a \emph{negative demand slope} for trader $\iii$.

\subsection{Equilibrium with restricted participation and price impact}\label{sub:equil} 

Given the above best response individual trader's problem, we shall discuss now how price impact is formed in a standard uniform-price equilibrium. 

Given the relationship between the optimal allocation $q_i\in\hili$ of trader $\iii$ with transaction prices $p\in\hil$ and $X_i\in\defppi$ given by~\eqref{eq:price_impact_dem}  and~\eqref{eq:slope_from_price_impact} respectively, the market-clearing equilibrium prices $\widehat{p}$ satisfy
\[
0 = \sum_{\iii} q_i = \sum_{\iii} X_i g_i - \pare{\sum_{\iii} X_i} \widehat{p}.
\]
The following Lemma helps the analysis of several points.

\begin{lem}\label{lem:non-degen}
Suppose that $|I_k| \geq 2$, for all $k\in K$. For fixed $j\in I$, if $D_i\in\defppi$ for all $i\in I \setminus \set{j}$, then with $D_{-j} \dfn \sum_{i\in I \setminus \set{j}} D_i$, we have $D_{-j}\in\defpp^\hil$, i.e., $D_{-j}$ is invertible.
\end{lem}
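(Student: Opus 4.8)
The plan is to verify strict positive-definiteness of $S \dfn \sum_{i \in I \setminus \set{j}} D_i$ directly through its associated quadratic form. Since each $D_i \in \defppi \subseteq \defp$ is nonnegative-definite, membership $S \in \defp$ is immediate; the real content is to upgrade nonnegativity to strict positivity on all of $\hil$, i.e.\ to show $\inner{y}{S y} > 0$ for every nonzero $y \in \hil$.

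First I would analyse when a single term vanishes. Fix $i \in I \setminus \set{j}$ and $y \in \hil$. Using the identity $\proji D_i = D_i = D_i \proji$ valid for $D_i \in \defppi$, together with the self-adjointness of the orthogonal projection $\proji$, I would compute $\inner{y}{D_i y} = \inner{\proji y}{D_i \proji y}$, the contribution of $(\id - \proji) y \in \hili^\perp$ dropping out by orthogonality. Because $D_i$ is strictly positive definite on $\hili$ and $\proji y \in \hili$, this quantity is zero if and only if $\proji y = 0$, that is, if and only if $y_k = 0$ for every $k \in K_i$.

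Next, since $S$ is a sum of nonnegative quadratic forms, $\inner{y}{S y} = 0$ forces $\inner{y}{D_i y} = 0$ for each $i \in I \setminus \set{j}$, and hence, by the first step, $y_k = 0$ for every $k \in \bigcup_{i \in I \setminus \set{j}} K_i$. The crux is then the combinatorial observation that $\bigcup_{i \in I \setminus \set{j}} K_i = K$: fixing any $k \in K$, the hypothesis $|I_k| \geq 2$ guarantees that $I_k$ contains at least one index distinct from the single trader $j$, so $k \in K_i$ for some $i \in I \setminus \set{j}$. This yields $y = 0$ and shows $S \in \defpp^\hil$.

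The main (and essentially only) obstacle is this last combinatorial point, which is where the hypothesis $|I_k| \geq 2$ does all the work: were some security $k$ accessible to $j$ alone (so that $|I_k| = 1$), the form $S$ would annihilate the corresponding coordinate direction and fail to be strictly positive definite. Everything else reduces to the routine linear-algebra manipulation of the first step, so I do not anticipate further difficulty.
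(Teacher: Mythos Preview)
Your proof is correct and follows essentially the same approach as the paper: both argue that $\inner{y}{S y}=0$ forces each $\inner{y}{D_i y}=0$, hence $y_k=0$ for all $k\in K_i$ with $i\neq j$, and then invoke $\bigcup_{i\neq j}K_i=K$ from $|I_k|\ge 2$. Your version is just slightly more explicit in justifying the step $\inner{y}{D_i y}=0\Rightarrow \proji y=0$ via the projection identity $D_i=\proji D_i\proji$, whereas the paper states this implication directly from $D_i\in\defppi$.
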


From, Lemma~\ref{lem:non-degen}, it follows that
\begin{equation}\label{eq:price}
\widehat{p} = \pare{\sum_{\iii} X_i}^{-1}  \pare{\sum_{\iii} X_i g_i}.
\end{equation}
The equilibrium allocation $(\widehat{q}_i; \, \iii)$ will be given by substituting $\widehat{p}$ in~\eqref{eq:price} for $p$ in~\eqref{eq:price_impact_dem}.

Within equilibrium, each trader's perceived market impact should coincide with their actual ones; in this regard, see also~\cite[Lemma 1]{RosWer15}. Assume that all traders, except trader $\iii$, have price impacts $(\Lambda_j; \, j\in I \setminus \set{i})$, leading to $(X_j; \, j\in I \setminus \set{i})$ as in~\eqref{eq:slope_from_price_impact}. If trader $\iii$ wishes to move allocation from $\widehat{q}_i$ to $\big( \widehat{q}_i + \Delta q \big)\in\hil_i$, the aggregate position of all other traders has to change by $- \Delta q$, which would imply that new prices would equal $\widehat{p} + \Delta p$, where, by~\eqref{eq:price_impact_dem},
\[
- \Delta q + \sum_{j\in I \setminus \set{i}} \widehat{q}_j = \sum_{j\in I \setminus \set{i}} X_j g_j - \pare{\sum_{j\in I \setminus \set{i}} X_j} \pare{\widehat{p} + \Delta p}.
\]
Given that
\[
\sum_{j\in I \setminus \set{i}} \widehat{q}_j = \sum_{j\in I \setminus \set{i}} X_j g_j - \pare{\sum_{j\in I \setminus \set{i}} X_j} \widehat{p},
\]
we obtain that
\[
\Delta p = X^{-1}_{-i} \Delta q, \quad \text{where} \quad X_{-i} \dfn \sum_{j\in I \setminus \set{i}} X_j.
\]
It follows that $\Lambda_i = X_{-i}^{-1}$ has to hold in equilibrium, for all $\iii$. With the above understanding, and recalling~\eqref{eq:slope_from_price_impact}, we state the following definition of equilibrium.

\begin{defn}\label{def:equilibrium}
A collection of $(X^*_i; \, \iii)\in\prod_{\iii} \defppi$, $(q^*_i; \, \iii)\in\prod_{\iii}\X_i$ and $p^*\in\X$ will be called \textbf{Nash equilibrium} if
\begin{equation}\label{eq:Nash_impact}
X^*_i = \pare{ B_i^{-\hili} + \proji (X^*_{-i})^{-1} \proji }^{-\hili}, \quad \iii,
\end{equation}
where $X^*_{-i} \dfn \sum_{j\in I \setminus \set{i}} X^*_j$, for all $\iii$, while $p^*$ and $(q^*_i; \, \iii)$ satisfy the corresponding market-clearing condition~\eqref{eq:price} and trader's optimization solution~\eqref{eq:price_impact_dem}, i.e.,
\[p^*=\pare{\sum_{\iii} X^*_i}^{-1}  \pare{\sum_{\iii} X^*_i g_i}\quad\text{ and }\quad q^*_i=X^*_i(g_i - p^*),\quad\iii.\]

Given a Nash equilibrium as above, the \textbf{equilibrium price impacts} $(\Lambda^*_i; \, \iii)\in(\defpp)^I$ are given by $\Lambda^*_i = (X^*_{-i})^{-1}$, $\iii$.\footnote{As already mentioned, the equilibrium is consistent to the standard uniform-price double auction as the ones in~\cite{Kyl89} or~\cite{Vay99}, where traders' beliefs and endowments are considered private information. The optimal demand of each trader at equilibrium solves the point-wise optimization problem~\eqref{eq:foc} with $\Lambda_i$ determined within equilibrium, depending on other traders' demands. In particular, the optimal demand at each price is invariant on the distribution that the trader has over the endowments and the beliefs of the rest of the traders (which is private information). As pointed out also in~\cite[Section I.]{MalRos17}, a learning process will not change the optimal demand schedule of the traders, as long as the a-priori distribution that each trader has over their counterparties' private endowment and beliefs is stochastically independent to the rest of the model.}
\end{defn}

Note that Nash equilibrium as in Definition~\ref{def:equilibrium} is solely characterized by condition~\eqref{eq:Nash_impact}, since both equilibrium allocations and prices follow directly from the market-clearing condition and the traders' individual optimization problems. Therefore, the task of establishing existence and uniqueness of Nash equilibrium is focused on the solution of~\eqref{eq:Nash_impact}.

\begin{rem}[Competitive equilibrium]\label{rem:competitive}
Competitive equilibrium under heterogeneous beliefs and restricted participation will be used as a benchmark for comparisons. Letting all traders be price-takers formally means that $\Lambda_i=0$ for each $\iii$. Therefore, we obtain directly from~\eqref{eq:price} that competitive equilibrium prices equal
\begin{equation}\label{eq:competitive_price}
	\cop=\pare{\sum_{\iii} B_i}^{-1}  \pare{\sum_{\iii} B_i g_i}.
\end{equation}
The corresponding competitive equilibrium allocation is Pareto optimal (see e.g.~\cite{Viv11}), and given via~\eqref{eq:price_impact_dem}:
\begin{equation}\label{eq:competitive_allocation}
\coq_i=B_i(g_i-\cop),\qquad \forall \iii.
\end{equation}
\end{rem}

\begin{rem}[No-trading condition]\label{rem:no_trading}
We directly obtain from~\eqref{eq:competitive_price} and~\eqref{eq:competitive_allocation} that there is no trading in competitive equilibrium (i.e., $\coq_i=0$ for all $\iii$) and the initial position coincides with a Pareto optimal allocation) if and only if all $(g_i; \, \iii)$ are equal, in which case they equal $\cop$. Hence, a necessary and sufficient condition for mutually beneficial trading is traders' different hedging needs (and/or different beliefs on the expected payoffs). While traders' different estimations on the (co)variances of the tradeable assets do not form part of this condition, difference of beliefs on second moments may still create room for mutually beneficial trading. Indeed, recall from~\eqref{eq:CARA_parameters} that $g_i=f_i-c_i/\delta_i$, where $c_i$ is the vector of covariances between the trader's endowment and the tradeable assets, under the subjective probability measure $\prob_i$. In particular, even under the case where traders' endowments are the same random variable $E$, disagreement on the covariance $\cov_i(E,S)$ is sufficient to induce non-zero trading. This fact highlights the effect of such disagreement on equilibrium models. 

Interestingly enough, a similar conclusion holds in the case of the non-competitive equilibrium. Indeed, we readily get from~\eqref{eq:price_impact_dem} and~\eqref{eq:price} that Nash equilibrium allocations are all zero if and only if $(g_i; \, \iii)$ are equal, where we recall that $(X^*_i; \, \iii)$ are solely determined by $(B_i; \, \iii)$. Therefore, the necessary and sufficient condition for non-zero trading is not affected by the price impact, which however (as we shall see in the sequel) heavily affects both equilibrium quantities and utility gains, especially when traders' beliefs on the second moment deviate. 
\end{rem}

\subsection{Existence and global uniqueness of equilibrium}

Recall that, in order to have a meaningful equilibrium discussion, we assume at least two traders for every security: $|I_k| \geq 2$ holds for all $k \in K$. As Lemma~\ref{lem:no_equil} shows, if $|I_k| = 2$ holds for some $k \in K$, then there exists no Nash equilibrium in the sense of Definition~\ref{def:equilibrium}. Therefore, the stronger condition $|I_k| \geq 3$ for all $k \in K$ is \emph{necessary} for Nash equilibrium\footnote{The necessity of at least three traders for formation of non-competitive equilibrium is also mentioned in all related literature; e.g.,~\cite{Kyl89},~\cite{Vay99} and~\cite{Viv11}. Intuitively, and in the view of the discussion in \S\ref{sub:price_impact} and \S\ref{sub:equil}, three individuals are necessary because a trader's strategy impacts a price that is formed through clearing by at least two other traders.}. Theorem~\ref{thm:nash} below reveals that this condition is also \emph{sufficient} for existence of a Nash equilibrium, and that it is also unique. In previous literature, under full participation and symmetric traders with common beliefs~\citep{RosWer15} show existence and uniqueness of price-impact Nash equilibrium. In the setting of decentralised markets (again with traders of common beliefs), existence and \emph{local} uniqueness of Nash equilibrium is shown in~\citep{MalRos17}.

\begin{thm}\label{thm:nash}
Whenever $|I_k| \geq 3$ holds for all $k \in K$, a unique Nash equilibrium $(X^*_i; \, \iii)$ in the sense of Definition~\ref{def:equilibrium} exists. Moreover, for any initial collection $(X^{0}_i; \, \iii)\in\pdefppi$, if one defines inductively the updating sequence
\[
X^{n}_i \dfn \pare{B_i^{-\hili} + \proji \pare{X^{n-1}_{-i}}^{-1} \proji }^{-\hili}, \quad \iii, \quad \nin,
\] 
it holds that
\[
\limn X^{n}_i = X^*_i, \quad \forall \iii.
\]
\end{thm}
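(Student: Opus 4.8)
My plan is to recast Definition \ref{def:equilibrium} as a fixed-point equation $X = \Phi(X)$ and to solve it by a contraction argument in Thompson's part metric on the cone of positive-definite forms. Define $\Phi = (\Phi_i)_{\iii}$ on $\pdefppi$ by
\[
\Phi_i(X) \dfn \pare{B_i^{-\hili} + \proji \pare{X_{-i}}^{-1} \proji}^{-\hili}, \qquad X_{-i} \dfn \sum_{j \in I \setminus \set{i}} X_j .
\]
Since $|I_k| \geq 3$ forces $|I_k \setminus \set{i}| \geq 2$, Lemma \ref{lem:non-degen} gives $X_{-i} \in \defpp^\hil$ for every $X \in \pdefppi$, so $\Phi$ is well defined and its fixed points are exactly the equilibria; moreover the updating sequence of the statement is $X^n = \Phi(X^{n-1})$. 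Thus both assertions reduce to showing that $\Phi$ is a strict contraction on a complete, $\Phi$-invariant subset, after which the Banach fixed-point theorem delivers existence, uniqueness and geometric convergence simultaneously.

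First I would trap the dynamics in an order interval bounded away from the boundary of the cone. Because $\proji (X_{-i})^{-1} \proji \succeq 0$, one always has $\Phi_i(X) \preceq B_i$. For a positive lower bound, set $\Pi_{-i} \dfn \sum_{j \in I \setminus \set{i}} \proj_j$; this is diagonal with $k$-th entry $|I_k \setminus \set{i}| \geq 2$, so $\Pi_{-i} \succeq 2\,\id$ and hence $\proji \Pi_{-i}^{-1} \proji \preceq \tfrac12 \proji$. Consequently, if $X_j \succeq t \proj_j$ for all $j$, then $X_{-i} \succeq t \Pi_{-i}$ and $\proji (X_{-i})^{-1} \proji \preceq (2t)^{-1} \proji$, so that $B_i^{-\hili} + \proji (X_{-i})^{-1} \proji \preceq (\lambda_i^{-1} + (2t)^{-1}) \proji$, where $\lambda_i > 0$ is the least eigenvalue of $B_i$ on $\hili$. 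Choosing $t \leq \tfrac12 \min_i \lambda_i$ makes the right-hand side $\preceq t^{-1} \proji$, i.e. $\Phi_i(X) \succeq t \proji$. Therefore the interval $\mathcal K_t \dfn \prod_i \set{A \in \defppi : t \proji \preceq A \preceq B_i}$ is $\Phi$-invariant; being closed in the Thompson metric and order-bounded, it is complete.

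Next I would establish the contraction on $\mathcal K_t$, equipped with $d(X,Y) \dfn \max_i d_T(X_i, Y_i)$, where $d_T$ denotes Thompson's metric. I would decompose $\Phi_i$ into the steps $X \mapsto X_{-i} \mapsto (X_{-i})^{-1} \mapsto \proji (X_{-i})^{-1} \proji \mapsto B_i^{-\hili} + \proji (X_{-i})^{-1} \proji \mapsto (\cdots)^{-\hili}$ and invoke four metric facts: summation is subadditive, $d_T(X_{-i}, Y_{-i}) \leq \max_{j \neq i} d_T(X_j, Y_j)$; inversion is a $d_T$-isometry; the compression $A \mapsto \proji A \proji$ is nonexpansive, since $A \preceq \lambda A'$ implies $\proji A \proji \preceq \lambda \proji A' \proji$; and translation by a fixed $C \succ 0$ is a strict contraction on Thompson-bounded sets, with factor $< 1$ depending only on an upper norm bound, on $\lambda_{\min}(C)$, and on the diameter. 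On $\mathcal K_t$ the norm of $\proji (X_{-i})^{-1} \proji$ is at most $(2t)^{-1}$, so the translation factor for $C = B_i^{-\hili}$ is uniform; composing the four facts yields $d_T(\Phi_i(X), \Phi_i(Y)) \leq \kappa_i \max_{j \neq i} d_T(X_j, Y_j)$, whence $d(\Phi(X), \Phi(Y)) \leq \kappa\, d(X,Y)$ with $\kappa \dfn \max_i \kappa_i < 1$. Banach then provides a unique fixed point $X^*$ in $\mathcal K_t$, with $\Phi^n(Z) \to X^*$ for every $Z \in \mathcal K_t$.

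Finally I would remove the restriction to $\mathcal K_t$. For arbitrary $(X^0_i) \in \pdefppi$ the first iterate obeys $X^1_i \in \defppi$ and $X^1_i \preceq B_i$; taking $t$ small enough that also $t \proji \preceq X^1_i$ for all $i$ puts $X^1 \in \mathcal K_t$, so $X^n = \Phi^{n-1}(X^1) \in \mathcal K_t$ for all $n \geq 1$ and $X^n \to X^*$, which is the convergence claim. Any equilibrium $\widehat X$ satisfies $0 \prec \widehat X_i \preceq B_i$, hence lies in $\mathcal K_s$ for all small $s$; as the $\mathcal K_s$ are nested and each contains a unique fixed point, all equilibria coincide with $X^*$, giving global uniqueness. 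I expect the genuine difficulty to be the translation estimate of the third paragraph: adding a fixed positive matrix contracts Thompson distance only on bounded sets (the factor tends to $1$ along unbounded directions), so the argument must first confine the iterates to $\mathcal K_t$. Securing the strictly positive lower endpoint there is exactly where $|I_k| \geq 3$ enters, through $\proji \Pi_{-i}^{-1} \proji \preceq \tfrac12 \proji$; with only $|I_k| \geq 2$ this lower bound collapses and no uniform contraction survives, consistent with the nonexistence recorded in Lemma \ref{lem:no_equil}.
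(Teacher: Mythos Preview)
Your argument is correct in outline and takes a genuinely different route from the paper. The paper never invokes a metric; it works entirely with the Loewner order. First it shows that $F$ is monotone and that $F(\bX)\preceq\bB$, then it builds sub- and super-solutions (Lemmas \ref{lem:upper bounds_iter}--\ref{lem:lower_bound}) and obtains existence of a maximal fixed point by monotone iteration (Lemmas \ref{lem:sandwich_fixed_points}--\ref{lem:maximal_fixed_point}). Uniqueness is the delicate step: the paper computes the second derivative of $t\mapsto F(\bX+t\bH)$ explicitly, shows it is $\preceq 0$ (Lemma \ref{lem:concavity}), and uses this ``operator concavity'' to force the minimal and maximal fixed points to coincide (Lemma \ref{lem:uniqueness_key}). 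Convergence from an arbitrary start is then obtained by sandwiching (Lemma \ref{lem:iteration-fixed-point}). By contrast, your Thompson-metric contraction delivers existence, uniqueness and (geometric) convergence in one stroke once the invariant interval $\mathcal K_t$ is in place; the role of $|I_k|\geq 3$ is the same in both proofs (it yields $\Pi_{-i}\succeq 2\,\id$, which gives the strictly positive lower endpoint). What the paper's route buys is order-theoretic by-products used elsewhere (monotone convergence from sub/super solutions and the comparison Lemma \ref{lem:equilibrium monotonicity}); what yours buys is a quantitative convergence rate and the avoidance of the second-derivative computation. One point you should make airtight: the statement that $A\mapsto A+C$ is a strict Thompson contraction on Thompson-bounded sets is standard (it follows from $e^{-r}B\preceq A\preceq e^{r}B$ and $A,B\preceq\beta C$ giving $A+C\preceq \tfrac{\beta e^{r}+1}{\beta+1}(B+C)$, with the ratio of logarithms bounded below $1$ uniformly for $r$ in a bounded interval), but you should either sketch this inequality or give a precise reference, since Banach requires a uniform Lipschitz constant and not merely strict decrease.
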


Theorem~\ref{thm:nash} not only guarantees the existence of a unique Nash equilibrium, but also provides an iterative algorithm to numerically calculate the equilibrium demands and price impacts. The only inputs for these calculations are the traders' participation restrictions sets $(K_i; \, \iii)$ and matrices $(B_i; \, \iii)$. This is a very important feature of the model, which highlights the connection between the difference of beliefs in the covariance matrix and the induced price impact.

Recall from~\eqref{eq:CARA_parameters} that the matrices $(B_i; \, \iii)$ are affected by the traders' risk aversions and beliefs on the securities covariance structure; equilibrium price impacts will not depend on the traders' beliefs on the securities' expectations and the hedging needs, that is, on the vectors $(g_i; \, \iii)$. However, these features are still important parts of the model, since they affect the after-transaction individual and aggregate utilities and hence the market's (in)efficiency. 

Note from~\eqref{eq:price_impact_dem} and~\eqref{eq:price} that Nash equilibrium is a no-trade equilibrium (i.e., $q_i = 0$ for all $\iii$) if, and only if, all vectors $(g_i; \, \iii)$ are equal. In fact, as stated in Remark~\ref{rem:no_trading}, the same necessary and sufficient non-trading condition holds for competitive equilibrium.

\subsection{The case of a risk-neutral trader}\label{sec:limit}
It is common in market microstructure literature to assume that some strategic traders are risk neutral, an assumption usually imposed to market makers or liquidity providers (see, e.g.,~\cite{Kyl85},~\cite{FarJoh02},~\cite{BiaGloSpa05} and the references therein). Although the definition of Nash equilibrium and Theorem~\ref{thm:nash} cannot be applied directly to risk neutral traders, we can accommodate the case of a single trader whose preferences approach risk neutrality via a limiting procedure.\footnote{The limiting procedure is more general than simply assuming trader's risk aversion goes to zero. We actually assume that the matrix-coefficient of the utility's quadratic term in~\eqref{eq:utility} converges to zero, which includes the case where a risk averse trader's estimations for the variances of the tradeable assets decrease.}. The 
proof of the well-defined limit is given in \S\ref{appsubsec:proof_of_prop_limiting_equil} of Appendix \ref{appsec:proofs}.

The existence of the limiting equilibrium can be seen as an extension of price-impact equilibria that have been used in~\cite{RosWer15} and~\cite{MalRos17}, so that a risk neutral trader is included in these market models. In fact, we may allow both difference of beliefs and restrictive participation in this limiting argument too.

\begin{prop}\label{prop:limiting_equil}
Let $I = \set{0, \ldots, m}$, where $m \geq 2$. Consider fixed $(B_i; \, \iii \setminus \{0 \})$, as well as a nondecreasing sequence $(B_0^n; \, \nin)$ with the property that
	$\limn (B_0^n)^{- \mathcal{X}_0} = 0$. If $(\bX^n; \, \nin)$ stands for the sequence of equilibria corresponding to $(B_0^n; \, \nin)$, then $(\bX^n; \, \nin)$  monotonically converges to a limit  $\bX^\infty\in\pdefppi$. Furthermore, $(X^\infty_i; \, \iii)$ solves the system
	\begin{equation*}
	X^\infty_0 = \pare{\proj_0 \pare{X^\infty_{-0}}^{-1} \proj_0}^{-\mathcal{X}_0}\quad\text{ and }\quad	X^\infty_i = \pare{B_i^{-\hili} + \proji \pare{X^\infty_{-i}}^{-1} \proji}^{-\hili}, \,\iii \setminus \set{0}.
	\end{equation*}
\end{prop}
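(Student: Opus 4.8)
The plan is to exploit the monotone structure of the fixed-point map of Theorem~\ref{thm:nash} together with the global convergence it provides. By Theorem~\ref{thm:nash} each $\bX^n = (X^n_i; \iii)$ exists, is unique, and lies in $\pdefppi$. For $\nin$, let $\Phi^n = (\Phi^n_i; \iii)$ denote the update map of the equilibrium problem with data $(B_0^n, B_1, \ldots, B_m)$, so that $\Phi^n_i(\bX) = \bpare{B_i^{-\hili} + \proji (X_{-i})^{-1} \proji}^{-\hili}$ for $\iii \setminus \set{0}$ and $\Phi^n_0(\bX) = \bpare{(B_0^n)^{-\hil_0} + \proj_0 (X_{-0})^{-1} \proj_0}^{-\hil_0}$, both defined on $\pdefppi$ (where $X_{-i} \in \defpp^\hil$ by Lemma~\ref{lem:non-degen}). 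First I would record two monotonicity facts. Each $\Phi^n$ is nondecreasing in the product Loewner order: if $\bX \preceq \bY$ coordinatewise then $X_{-i} \preceq Y_{-i}$, and anti-monotonicity of the inverse together with order-preservation under conjugation by $\proji$ gives $\Phi^n_i(\bX) \preceq \Phi^n_i(\bY)$. Moreover, since $B_0^n \preceq B_0^{n+1}$ yields $(B_0^n)^{-\hil_0} \succeq (B_0^{n+1})^{-\hil_0}$, one more application of anti-monotonicity shows $\Phi^n_0 \preceq \Phi^{n+1}_0$ pointwise, while $\Phi^n_i = \Phi^{n+1}_i$ for $i \neq 0$; hence $\Phi^n \preceq \Phi^{n+1}$ pointwise.

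To obtain $\bX^n \preceq \bX^{n+1}$ I would start the $\Phi^{n+1}$-iteration at the previous equilibrium. Since $\bX^n = \Phi^n(\bX^n) \preceq \Phi^{n+1}(\bX^n)$, setting $\bY^0 = \bX^n$ and $\bY^{k+1} = \Phi^{n+1}(\bY^k)$ gives $\bY^0 \preceq \bY^1$; monotonicity of $\Phi^{n+1}$ then propagates this inductively, so $(\bY^k)$ is nondecreasing. By Theorem~\ref{thm:nash}, $\bY^k \to \bX^{n+1}$, and passing to the limit in $\bY^0 \preceq \bY^k$ yields $\bX^n \preceq \bX^{n+1}$. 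Thus $(\bX^n)$ is nondecreasing in the product order.

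For convergence I need a uniform upper bound. For $i \neq 0$, $\proji (X^n_{-i})^{-1}\proji \succeq 0$ gives $B_i^{-\hili} + \proji(X^n_{-i})^{-1}\proji \succeq B_i^{-\hili}$ on $\hili$, whence $X^n_i \preceq B_i$ for every $\nin$. Consequently $X^n_{-0} \preceq B_{-0} \dfn \sum_{i \in I \setminus \set 0} B_i$, and $B_{-0} \in \defpp^\hil$ by Lemma~\ref{lem:non-degen}; therefore $(X^n_{-0})^{-1} \succeq B_{-0}^{-1}$, so that $(B_0^n)^{-\hil_0} + \proj_0 (X^n_{-0})^{-1}\proj_0 \succeq \proj_0 B_{-0}^{-1}\proj_0 \in \defpp^{\hil_0}$ and hence $X^n_0 \preceq \bpare{\proj_0 B_{-0}^{-1}\proj_0}^{-\hil_0}$. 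Every coordinate being nondecreasing and bounded above in the Loewner order, $(\bX^n)$ converges to some $\bX^\infty$. Each $X^\infty_i$ is symmetric nonnegative-definite and supported on $\hili$ (closed conditions that pass to the limit), and $X^\infty_i \succeq X^1_i \in \defppi$, so $\bX^\infty \in \pdefppi$.

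Finally I would pass to the limit in the relations $X^n_i = \Phi^n_i(\bX^n)$. For each $\iii$ we have $X^\infty_{-i} = \sum_{j \in I \setminus \set i} X^\infty_j \in \defpp^\hil$ by Lemma~\ref{lem:non-degen}, so $(X^n_{-i})^{-1} \to (X^\infty_{-i})^{-1}$ and the subspace inverse is continuous at these nonsingular limits; letting $n \to \infty$ reproduces the stated equations for $i \neq 0$. For $i = 0$ the extra ingredient is the hypothesis $\limn (B_0^n)^{-\hil_0} = 0$, which deletes the $(B_0^n)^{-\hil_0}$ term and gives $X^\infty_0 = \bpare{\proj_0 (X^\infty_{-0})^{-1}\proj_0}^{-\hil_0}$. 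I expect the substantive step to be the monotone comparison $\bX^n \preceq \bX^{n+1}$: it is the only place where the global convergence of the iteration in Theorem~\ref{thm:nash} is genuinely used, the rest reducing to order-theoretic bounds and continuity of the inverse on the relevant subspaces.
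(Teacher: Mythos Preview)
Your proof is correct and follows essentially the same route as the paper: the paper packages the comparison $\bX^n \preceq \bX^{n+1}$ as a separate monotonicity lemma (Lemma~\ref{lem:equilibrium monotonicity}), proved by observing that $\bX^n \in \bbL$ for the $(n+1)$-problem and invoking the maximal-fixed-point statement, which is the same mechanism you unpack via the nondecreasing iteration of Theorem~\ref{thm:nash}. The only cosmetic difference is the upper bound on $X^n_0$: the paper uses the second inequality of Lemma~\ref{lem:upper bounds_iter}, namely $X^n_0 = F_0(\bX^n) \preceq X^n_{-0} \preceq \sum_{i \neq 0} B_i$, whereas you go through $\proj_0 B_{-0}^{-1}\proj_0$; both are valid and yield the same conclusion.
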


Vanishing risk aversion is also related to welfare gains from the equilibrium. For example, for full participation,~\citep{MalRos17} show that traders with lower risk aversion have higher price impact in Nash equilibrium. Furthermore, under the non-competitive market models for thin risk-sharing markets studied in~\citep{Anth17},~\citep{AnthKar17} and~\citep{AnthrKardVich20}, traders with sufficiently low risk aversion prefer non-competitive markets. It turns out that this is also the case under the price-impact model at hand. Indeed, a trader with full access to the whole market who is also sufficiently close to risk neutrality prefers non-competitive to competitive equilibrium whenever the limiting transaction is non-zero (despite the fact that the aggregate utility welfare is reduced when equilibrium departures from Pareto optimal allocation). In other words, sufficiently low risk averse traders benefit by the price impact as they get higher utility at equilibrium trading. We state this (consistent to the literature) result below, where we recall the notation of the competitive equilibrium given in Remark~\ref{rem:competitive}. The proof is given in \S\ref{appsubsec:proof_prop:risk_neutral} of Appendix \ref{appsec:proofs}.

\begin{prop}\label{prop:risk_neutral}
Let $K_0=K$. Trader $0$ will be such that $\delta_0 \to\infty$, while traders $I \setminus \{ 0\}$ are fixed. Then, in the limit we have
\[\lim_{\delta_0\rightarrow\infty}\underbrace{\left\{U_0 (q^*_0) -\inner{q^*_0}{p^*}-u_0\right\}}_{\text{Utility gains at Nash}}\geq \lim_{\delta_0\rightarrow\infty}\underbrace{\left\{\left(U_0 (\coq_0) -\inner{\coq_0}{\cop}-u_0\right)\right\}}_{\text{Utility gains at competitive}}=0,\]
and, except for the (uninteresting) case $\lim_{\delta_0\rightarrow\infty}q^*_0 = 0$, the previous inequality is strict.
\end{prop}

Intuitively, the above result implies that risk neutral trader possesses power to impact the non-competitive equilibrium, since $\Lambda_0^{\infty}$ is not zero. This means that while in the competitive equilibrium a risk-neutral trader is indifferent to participate in the market in terms of utility gains, in the non-competitive setting the trader is willing to participate due to the positive utility effect of price impact. This is a clear evidence that traders with risk aversion close to zero are more willing to participate in thin markets than fully competitive ones.

\section{Heterogeneous Beliefs, Restricted Participation \& Market Efficiency}\label{sec:exampleSocialIneff}
The difference in beliefs on the covariance matrix of the tradeable assets has important implications on equilibrium price impacts, as well as on market efficiency and utility gains. The goal of this section is to highlight these implications and to separate them from the implications that are induced by simpler forms of traders' heterogeneity, such as differences in their risk aversion. We start with a quite interesting result about equilibrium pricing with and without restrictions, we then provide comparative statics, and finally give a simplified example that indicates the importance of different beliefs on second moment on the welfare comparison.

\subsection{An undesirable effect of common beliefs on second moments}

Given that tradeable assets are generally correlated, imposing participation restrictions on some of them should affect both prices and allocations of the whole market. For example, if an investor's endowment is negatively correlated to two assets, a long position on both of these assets will decrease the total risk. A restriction on trading one of these assets will intuitively imply that the long position in the unrestricted asset will increase. Under an equilibrium asset perspective, (ceteris paribus) this will increase the demand of the unrestricted asset and hence its equilibrium price. This effect should be even more pronounced in a model with price impact, in the sense that traders are not price takers, and the effect of restrictions to the their strategies (demands) should also affect prices. However, when all traders agree on the assets' second moments, this reasonable conjecture does not hold. In particular, under common beliefs on covariance matrices, restrictions on some asset leave the equilibrium prices of the unrestricted assets unchanged (although the equilibrium allocations change). In other words, disagreement on the covariance matrices is a necessary condition in order to have the reasonable effect of prices changing due to restrictions. In fact, we show that this result holds both in competitive and the non-competitive market settings. 

We start with the competitive equilibrium of Remark~\ref{rem:competitive}, and consider a market with full participation. When $C_i=C$ for all $i\in I$,~\eqref{eq:competitive_price} directly gives equilibrium prices $p^{\text{co,f}} = \sum_{i\in I} w^{\text{co}}_i g_i$, where $w^{\text{co}}_i = \delta_i / \delta$, $i\in I$, $\delta \dfn \sum_{i\in I} \delta_i$. The equilibrium price vector does not depend on the common covariance matrix; it is a weighted average of the linear parts of traders' demand functions $(g_i; \, i\in I)$. When traders agree on risk estimations, the equilibrium price reflects the source of the mutually beneficial trading, i.e., their different hedging needs, wherein the weights on the price formulation solely depend on traders' different risk appetite. On the other hand, when traders disagree on the covariance matrix of the tradeable assets, the equilibrium price vector does depend on traders' different beliefs: it is again a weighted average of traders' hedging needs, but now the weights also reflect their different estimation of risks---see again~\eqref{eq:competitive_price}. Roughly speaking, relatively lower $C_i/\delta_i$ means relatively higher weight. This is intuitive, since, as in the common-beliefs case, weights reflect traders' different risk appetite, which now includes the different estimations. 

As we shall argue in Remark~\ref{rem:same_prices_competitive} below, restricted participation does not alter the price independence on common variance: the effect of restricted traders to the prices of the unrestricted assets remains the same, even if their demands for some other assets is forced to be zero. More precisely, the equilibrium prices of the unrestricted assets remain the weighted average of (all) traders' hedging needs and the weights are again traders' relative risk tolerances. This is mainly due to the agreement on covariance matrix. If $C_i$'s do not coincide, the weights of the prices depend on the ``restricted'' variances $\pi_iC_i\pi_i$ (not the whole $C_i$), which means that compared with the full participation setting restrictions generally change the weights and hence the equilibrium prices. 

Surprisingly, including price impact in the market model does not change this fact. Similar to the competitive case, and in view of~\eqref{eq:price}, Nash equilibrium prices can be regarded as a weighted average of traders' hedging needs, with weights weights depend on equilibrium demand slopes $(X^*_i; \, i\in I)$, that are strongly linked to each trader's price impact. If traders agree on the covariance matrix, these weights again depend solely on traders' risk tolerances and not on this common covariance matrix (however, these weights are not the relative risk tolerances, as shown in~\cite{RosWer15} and~\cite{MalRos17}). As we shall see in Theorem \ref{thm:equal_prices} and the discussion following it, and again similar to the competitive case, restrictions on some assets keep weights unchanged for the unrestricted assets. This is a highly unexpected, and in our view undesirable result: when traders are not price-takers, their strategies that impact prices should depend on the pool of the assets they are allowed to trade. As it turns out, the unrestricted assets' prices will depend on the restrictions only if we allow traders to have different beliefs on the asset's second moments.

Below we formally prove the points made in the above discussion for both competitive and non-competitive markets. We assume that all traders agree on the covariance matrix: $C_i = C$ for $i\in I$, but we allow potentially different risk tolerance levels $(\delta_i; \, i\in I)$. For the restricted participation set-up, define
\[
	K_c	\dfn \bigcap_{i\in I} K_i; \quad \hil_c \dfn \bigcap_{i\in I} \hili.
\]
In words, $K_c$ contains the common assets that can be traded by everyone (i.e., the unrestricted assets), and $\hil_c$ is the subspace of $\hil \equiv \Real^K$ containing points $x\in\hil$ such that $x_j = 0$ for all $j \notin K_c$. We shall show that equilibrium prices of the unrestricted assets (those that belong in $K_c$) remain the same when we pass from full to restricted participation. As the next remark implies, this holds both in competitive and Nash equilibria, even though the prices will be different in these two different scenarios.

To ease notation throughout, we set $\pi_c$ the projection to $\hil_c$, and $\zeta = \id_\hil - \pi_c$ to be the projection on the orthogonal complement $\hilsub$ of $\hil_c$, i.e.,
\[
	\hilsub	\dfn \set{x\in\hil \such x_j = 0, \ \forall j\in K_c }.
\]

\begin{rem}[Competitive equilibrium]\label{rem:same_prices_competitive}
	Recall Remark~\ref{rem:competitive}. Under competitive equilibrium setting, $X^f_i = \delta_i C^{-1}$ holds for all $i\in I$ in the full participation case, which implies (see also~\eqref{eq:competitive_price}) that $p^f:=p^{\text{co,f}} = \sum_{i\in I} w^{\text{co}}_i g_i$ (recall that $w^{\text{co}}_i = \delta_i / \delta$, $i\in I$, $\delta \dfn \sum_{i\in I} \delta_i$). In restricted participation, straightforward but slightly tedious computations given in \S\ref{appsubsec:calc_rem:same_prices_competitive} of Appendix \ref{appsec:proofs} give prices $\pi_c p^{\text{co,r}} = \sum_{i\in I} w^{\text{co}}_i \pi_c g_i$ for assets in $K_c$, which agree with the full participation equilibrium prices for these assets.		
\end{rem}

We now consider equilibrium with price impact. Here, in the full participation case we have $X^f_i = \eta_i C^{-1}$, where the coefficients $\eta_i$ are the unique solutions to the system\footnote{The fact that this system has a unique solution comes from~\cite{MalRos17}. It also follows from Theorem~\ref{thm:nash}, in the trivial ``single asset'' case.}
\[
	1/ \eta_i = 1/\delta_i + 1/\eta_{-i}; \quad i\in I.	
\]
Indeed, it is immediate to see that the equations
\[
(X^f_i)^{-1} = \delta_i^{-1} C + (X^f_{-i})^{-1}; \quad i\in I,
\]
are satisfied at the equilibrium. We therefore have $p^f = \sum_{i\in I} w_i g_i$, where $w_i = \eta_i / \eta$, $i\in I$, $\eta \dfn \sum_{i\in I} \eta_i$.

A result similar to the one of Remark~\ref{rem:same_prices_competitive} under restricted participation comes as a corollary of Theorem~\ref{thm:equal_prices} below, the proof of which is given at \S\ref{appsubsec:proof for prices} of Appendix \ref{appsec:proofs}. Before stating the result, we introduce some additional notation: With $\hil_c$, $\pi_c$, $\hilsub$ and $\zeta$ as above, we define
\[
	\hilsub_i \dfn \hilsub \cap \hil_i; \quad i\in I,	
\]
and set $\zeta_i$ to be the projection on $\hilsub_i$, so that $\pi_i = \pi_c + \zeta_i$ for all $i\in I$. Also, we define
\[
	D \dfn 	\zeta (C - C C^{-\hil_c} C) \zeta\in\defppsub; \quad E \dfn C^{-\hil_c} C \zeta.
\]
Note that from Theorem~\ref{thm:nash}, we know that there exists a unique solution $(Y_i; \, i\in I)\in\pdefppsubi$ of the system
\[
	Y_i = (\zeta_i (\delta_i^{-1} D + Y_{-i}^{-\hilsub}) \zeta_i)^{-\hilsubi}; \quad i\in I.\footnote{In the case where $\hilsub_i = \{ 0 \}$ (i.e., when $K_i = K_c$) for $i\in I$, we tacitly assume that $Y_i \equiv 0$; then, the equation for trader $i\in I$ may be removed from the system.}	
\]

\begin{thm}\label{thm:equal_prices}
 Under the above assumptions and notation, the Nash equilibrium at the restricted participation case is determined by $(X^r_i; \, i\in I)\in\pdefppi$ which satisfies
\[
	X^r_i = \eta_i C^{-\hil_c} + E Y_i E' - E Y_i - Y_i E' + Y_i; \quad i\in I.
\]
\end{thm}

As a corollary of Theorem \ref{thm:equal_prices}, straightforward computations in \S\ref{appsubsec:calc_after_thm:equal_prices} of Appendix \ref{appsec:proofs} give prices $\pi_c p^r = \sum_{i\in I} w_i \pi_c g_i$ for assets in $K_c$, agreeing with the full participation case.

\subsection{Comparative statics}

The main input of our market model is the traders' covariance matrices, properly scaled with their risk tolerance coefficients. In this context, the following result states that equilibrium price impacts are monotonically increasing (in positive-semidefinite order) with respect to these covariance matrices; the proof follows directly from Lemma~\ref{lem:equilibrium_monotonicity}.\footnote{This result is in line with~\cite[Theorem 2, item (ii)]{MalRos17}, but we further consider heterogeneous covariance matrices, and not just different risk aversions.} 
\begin{prop}\label{pro:equilibrium_monotonicity}
Let $\bB^1 = (B^1_i; \, \iii)\in\pdefppi$ and $\bB^2 = (B^2_i; \, \iii)\in \pdefppi$ be such that $\bB^1 \preceq \bB^2$. If $\mathbf{\Lambda}^1 = (\Lambda^1_i; \, \iii)\in(\defpp)^I$ and $\mathbf{\Lambda}^2 = (\Lambda^2_i; \, \iii)\in(\defpp)^I$ stand for the associated unique equilibrium price impacts, then $\mathbf{\Lambda}^2 \preceq \mathbf{\Lambda}^1$. 
\end{prop}
 
Recall that $B_i = \delta_i C_i^{-\hili}$, for $\iii$. Therefore, if one trader believes in increased market variance (keeping risk tolerances fixed), equilibrium price impacts increase even for the assets the trader does not have access to and, more importantly, the same happens for all other traders as well. This holds under any market's limited participation setting and yields that there is higher price impact in the market where estimated risk is higher (all else being equal). Intuitively, higher $(C_i; \, \iii)$, i.e., lower $(B_i; \, \iii)$, implies lower demand slopes for all traders. The latter implies less elastic demand functions, hence traders require higher price compensation in order to offset risky positions, yielding higher price impact for all. Similar reasoning applies when subjective variances remain the same and the traders' risk tolerances $(\delta_i; \, \iii)$ increase. 


While the aforementioned monotonicity holds even when traders have common beliefs, differences of beliefs on payoffs' (co)variances do affect the comparison of traders' price impact within equilibrium. Based on Proposition~\ref{pro:equilibrium_monotonicity} and related literature, it is reasonable to expect that traders with less elastic demand functions have higher price impact at equilibrium, when compared with their more risk-averse counterparties. This is indeed the case when the covariance matrix is common for all traders and traders' heterogeneity stems from different risk aversions; in particular, traders with lower risk aversion have higher price impact at equilibrium---see~\cite[Theorem 2]{MalRos17}. However, such monotonicity does \emph{not} occur necessarily when the market has at least two assets and the heterogeneity of the demand function's slope also involves the covariance matrices: traders with less elastic demand functions do not necessarily have higher price impact, when they disagree on the assets' second moments. This situation is demonstrated through an indicative counterexample, developed in \S\ref{appsubsec:proof_of_prop:monotonicity} of Appendix \ref{appsec:proofs}. We formally state this result. 

\begin{prop}\label{prop:monotonicity}
	Let $K_i =K$, for all $i\in I$ and $|I_k| \geq 3$ for each $k$. Then, even if traders have the same risk aversion, $B_i\preceq B_j$ does not necessarily imply that $\Lambda^*_j \preceq \Lambda^*_i$ for $(i, j)\in I \times I$, where $(\Lambda_i^*; \, \iii)\in(\defpp)^I$ stands for the unique equilibrium price impacts. 
\end{prop}

If traders agree on the assets' second moments, less risk-averse traders face more elastic residual demand functions and hence their counterparties have lower price impact (ceteris paribus). Intuitively, more risk tolerant traders are more willing to offset risky positions (lower required cash compensation), which implies the impact that their counterparties' demand has on the prices to be lower. However, this situation changes when traders disagree on the second moments of a market with at least two assets. In particular, if (say) trader 1 has lower estimation of the risk in terms of covariance matrix than (say) trader 0, it is not necessarily true that trader 1 faces a more elastic residual demand than trader 0 at equilibrium. This is because the residual demand that traders face depends on the sum of other traders' estimated variance-covariance matrices and the price impact is not proportional to a common matrix, as in the case of common covariance matrix. The counterexample that we provide in order to prove Proposition~\ref{prop:monotonicity} considers two tradeable assets and focuses on trader 0 who estimates them as uncorrelated, while trader 1 estimates lower variances for both assets and also estimates negative correlation, resulting in $C_1\preceq C_0$. Therein, we show that, when we include yet one more trader having covariance matrix with positive estimated correlation, the price impact of trader 1 is not higher than that of  trader 0. This is partially because opposite estimated correlations of traders 1 and the extra trader does not imply that the residual demand that trader 0 faces is less elastic than trader 1. (It is also not true that trader 0 has higher price impact than trader 1; ordering of symmetric matrices is not a complete relation.)

\subsection{Under different beliefs, restrictions may increase welfare}\label{subsec:example}

As mentioned in the Introduction, in the presence of price impact there exist circumstances for which there is social benefit under participation restrictions in the market, as compared to the unrestricted case. The differences of beliefs in the covariance structure of the securities' returns is a crucial part of this phenomenon. The following example, where the equilibrium quantities are explicitly calculated, provides intuition on how disagreement among traders on the covariance matrix may increase aggregate utility when restrictions are imposed. 

Although a related result appears also under other forms of market's frictions (such as market decentralization of), we show here that the heterogeneity on second moments' beliefs can create welfare-increasing restrictions even under a minimal set-up, i.e., the least number of assets and traders, common risk aversions and disagreement solely on correlation. Furthermore, the welfare loss of utility from the restriction's withdrawn does not coincide with higher price impact of the trader for which the restrictions are withdrawn. 

\smallskip

For the purposes of the whole \S\ref{subsec:example}, we assume two securities and four traders: $K = \set{1,2}$, $I = \set{0,1,2,3}$, and that $\delta_i = 1$ holds for all $i\in I$. Traders in $I_{-0} = \set{1,2,3}$ are assumed identical, and such that $g_i = 0$ and $C_i = C_1$ for $i\in I_{-0}$, where
\[
C_1 = \left(\begin{array}{c c}
	1 & 0 \\
	0 & 1
	\end{array} \right).
\]
Trader 0 agrees with the rest of the traders on the assets' variance, but has different beliefs on the securities' correlation, i.e., 
\[
C_0 = \left(\begin{array}{c c}
1 & \rho \\
\rho & 1
\end{array} \right),
\]
with $\rho\in(-1,1)$. Lastly, we set $g_0 = (\gamma_1, \gamma_2)\in\Real^2\setminus \{0\}$ and recall that vector $g_0$ takes into account not only the subjective expectations of security payoffs, but also their covariance with the trader's initial endowment. For instance, positive values for $\gamma_1$ do not necessarily mean that trader 0 is more optimistic about the first security than the rest of the traders, and may simply reflect traders' difference in hedging needs. Note also that since traders in $I_{-0}$ have equal $g_i$'s, they will not trade between each other, rather they will get utility gain from the transaction through equally offsetting the hedging needs of trader 0. We will show below that in such simply thin market, the price impact of the trader 0 (who has the highest need for trading) is lower if unrestricted to trade both assets. This creates cases where the total gain of utility is lower in the setting of full participation.

\subsubsection{Restricted participation}

We assume that traders in $I_{-0}$ have no trading restrictions ($K_i = K$, for $i\in I_{-0}$), while trader 0 is restricted to trade only the first asset ($K_0 = \set{1}$). Write $(X^r_i; i\in I)$ for the solution to the system~\eqref{eq:Nash_impact} of equations, where the superscript ``$r$'' denotes \emph{restricted} participation. By symmetry, $X^r_1 = X^r_i$ holds for all $i\in I_{-0}$. Noting that $X^r_1(i, j) = 0$ whenever $(i, j) \neq (1,1)$, 
we obtain the equations
\begin{align*}
	1 / X^r_0(1,1) &= 1 + (1/3) (1 / X^r_1(1,1)), \\
	X^r_1 &= \left( C_1 + (X^r_0 + 2 X^r_1)^{-1} \right)^{-1}.
\end{align*}
In fact, one may solve the above explicitly, directly checking that the (unique) solution of these equations and the induced equilibrium price impacts are
\[
X^r_0 = \left(\begin{array}{c c}
2/3 &0 \\
0 &0
\end{array} \right), \quad
X^r_1 = \left(\begin{array}{c c}
2/3 &0 \\
0 &1/2
\end{array} \right),\quad
\Lambda^r_0 = \left(\begin{array}{c c}
1/2 &0 \\
0 & 2/3
\end{array} \right), \quad
\Lambda^r_1 = \left(\begin{array}{c c}
1/2 &0 \\
0 &1
\end{array} \right).
\] 

From~\eqref{eq:price}, the equilibrium price vector under the restricted participation setting is $p^r = (X^r_0 + 3 X^r_1)^{-1} X^r_0 g_0 = (\gamma_1/4, 0)$. Also~\eqref{eq:price_impact_dem} yields the equilibrium position $q^r_i = - X^r_i p^r = - (\gamma_1/6, 0)$ for  $i\in I_{-0}$, and a position $q^r_0 = - 3 q^r_1 = (\gamma_1/2, 0)$ for trader $0$; when $\gamma_1>0$, trader 0 buys the first security (equally) from the rest of the traders at a positive price (required cash compensation). Note that there is no transaction for the second asset, since all the unrestricted traders are identical, leaving no room for sharing risks or beliefs (implying that disagreement on assets' correlation does not affect equilibrium in the restricted market). 

Following related literature, we measure the market's efficiency with \emph{aggregate} utility. It readily follows that, at equilibrium in the restricted participation setting, this quantity equals
\[
\sum_{i\in I} \pare{\inner{q^r_i}{g_i} - \frac{1}{2 \delta^r_i}\inner{q^r_i}{C_i q^r_i}} =\inner{q^r_0}{g_0} - \frac{1}{2}\inner{q^r_0}{C_0 q^r_0} -3 \frac{1}{2}\inner{q^r_1}{ q^r_1} = \frac{\gamma_1^2}{3}.
\]
As expected, higher equilibrium transaction leads to higher trading welfare. The individual after-transaction utility gain is $\gamma_1^2/4$ for trader 0 and $\gamma_1^2/36$ for each trader $i\in I_{-0}$. Despite restrictions, trader 0's motive to trade leads to higher utility gain compared to the rest of the traders, who just offset trader 0's demand.

\subsubsection{Full participation}\label{subsubsec:example_full_part}

Here, we lift the restriction that trader 0 faced, assuming $K_i = K$ for all traders $i\in I$. Write $(X^f_i; i\in I)$ for the solution to the system of equations, where the superscript ``$f$'' denotes \emph{full} participation. This full participation equilibrium is a more complicated problem, but we shall derive an analytic expression below. By symmetry, $X^f_1 = X^f_i$ for all $i\in I_{-0}$, and we have the equations \begin{equation}\label{eq:full_part_proof1}
	X^f_0 = \left(C_0 + (3 X^f_1)^{-1} \right)^{-1}; \qquad X^f_1 = \left(C_1 + (X^f_0 + 2 X^f_1)^{-1} \right)^{-1}.
\end{equation}
Substituting the first to the second, we obtain
\begin{equation}\label{eq:full_part_proof2}
(X_1^f)^{-1} =  C_1 + ((C_0 + (3 X^f_1)^{-1})^{-1} + 2 X^f_1)^{-1}.
\end{equation}
Hence, from the solution of~\eqref{eq:full_part_proof2}, we also have the value of $X^f_0$ from~\eqref{eq:full_part_proof1}. One can find a closed-form expressions for $X_1^f$ in \eqref{eq:full_part_proof2}, and therefore obtain all relevant quantities (in particular, prices $p^f$, price impacts $\Lambda^f_i$, optimal positions $q_i^f$ for $i \in I$, as well as aggregate utility) in this full participation equilibrium. All these quantities, which are explicitly expressed in terms of $\rho$ and $g_0$, are not very indicative, and their calculation is deferred to \S\ref{appsubsec:equilibrium_example_full_part} of Appendix~\ref{appsec:proofs}. However, their usefulness emerges when we wish to compare the efficiency of the equilibria under full and restricted participation, which we tackle next.


\subsubsection{Efficiency comparison}
Equipped with analytic formulas for the equilibrium price and quantities in both full and restricted participation, we are able to compare the utility differentials and identify situations where restricted participation yields higher welfare. In such scenarios, it is socially optimal to impose trading restrictions, depending on the difference of beliefs regarding correlation and traders' hedging needs. The proof of the following result is given in \S\ref{appsubsec:proof_lem_full_part_1} of Appendix \ref{appsec:proofs}.

\begin{lem}\label{lem:full_part_1}
	Within the present setting, and with $r \dfn (2/3) \sqrt{\sqrt{113} -9} \approx 0.851$, whenever $|\rho| < r$ and $\rho \neq 0$, there exists $g_0$ which results in higher aggregate utility within the restricted, as compared to full, participation equilibrium.
\end{lem}


Through this above example, we present a situation where lifting participation restrictions for one trader results in lower aggregate utility gains. One may conjecture that this utility loss stems from the fact that this specific trader has higher price impact under full participation, the exploitation of which reduces the transaction's welfare. However, this is not the case as we now state; the proof of Lemma~\ref{lem:full_part_2} below can be found in \S\ref{appsubsec:proof_lem_full_part_2} of Appendix \ref{appsec:proofs}. 
\begin{lem}\label{lem:full_part_2}
	Within the present setting, for all $\rho\in(-1, 1)$ it holds that
$\Lambda_0^f \preceq \Lambda_0^r$.
\end{lem} 

Intuitively, Lemma \ref{lem:full_part_2} can be explained in the following way. As already mentioned, trader 0 has motive to trade with the rest of the traders, who collectively take the opposite position. When trader 0 is restricted, the difference of beliefs on the assets' correlation does not affect equilibrium prices, allocations and price impact, which is symmetrically allocated between trader 0 and all traders in $I_{-0}$ (this is also because traders in $I_{-0}$ do not trade the second asset). However, when trader 0 is allowed to trade the second asset, a different estimated correlation stands against the beliefs of traders in $I_{-0}$; this difference gives them more room to apply pressure against trader 0, since they are more in number and their price impact works aggregately against trader 0. Note that trader 0 has lower price impact for every $\rho\in(-1,1)$: it is the existence of the disagreement which matters, and not its direction or size.

Lower price impact of trader 0 at full participation causes reduction of welfare when restrictions are withdrawn, at least for some initial risk exposures. The counterexample indicates first that \emph{it is possible that full participation reduces the price impact of a trader who has more need to trade (benefits more from risk-sharing)}, and secondly that \emph{the total welfare may be lower due to the withdrawal of restrictions}. Importantly, a trader's lower price impact with less restriction (that potentially leads to restricted markets with higher welfare) may only occur when traders disagree on second moments.

\begin{rem}\label{rem:competitive_comparison}
In principle, two features affect the market's welfare differential due to restricted participation: non-competitiveness (stemming from the traders' price impact) and heterogeneity of beliefs, especially on second moments. As stated in Remark~\ref{rem:competitive}, competitive equilibrium in full participation leads to Pareto optimal allocations, regardless of traders' heterogeneity; therefore, welfare can only decrease upon imposing any kind of restricted participation in a competitive setting. On the other hand, the previous counterexample demonstrates that even the slightest heterogeneity in beliefs on second moments alone may render restricted participation beneficial in welfare terms, under non-competitive market structures.

The effect of traders' heterogeneity within a full-participation structure, either competitive or non-competitive, is not straightforward to analyse. For example, there seem to be no general monotonicity implications between the aggregate utility gains from trading and the deviation of traders' beliefs on second moments. The implications of such heterogeneity on the welfare of a competitive market needs further analysis that is beyond the scope of this work. 
\end{rem}

\section{Conclusive Remarks}\label{sec:conclu}

In this paper we consider a thin financial market, where traders are potentially restricted to trade some of the tradeable assets. Traders are not price-takers; each one strategically exploits their price impact, rendering the market non-competitive. Our main contribution to the related literature is the imposed traders' heterogeneous beliefs on the covariance matrix of the tradeable assets. As mentioned in the introduction, empirical evidence has indicated that traders indeed disagree on the (co)variances of tradeable assets. 

We first prove existence and global uniqueness of a Nash equilibrium for a general restriction structure. We also provide an algorithm that numerically calculates the equilibrium quantities. 

We then focus on traders' heterogeneity on covariance matrices, and argue that this is far form being just a theoretical venture. Different beliefs on second moments result in several economic deviations from the corresponding equilibrium under common beliefs, but where other sources of heterogeneity are present. First of all, we show that assuming common beliefs on second moments implies that restrictions on the trading of certain assets keep do not affect equilibrium prices on unrestricted assets. This is clearly an unrealistic modelling outcome, especially when assets are correlated. When traders disagree on second moments, this unfortunate feature disappears. Therefore, a prediction of our model is that restrictions change equilibrium prices only when traders disagree on the variance-covariance matrices.

Furthermore, we argue that traders with lower estimated variances do not necessarily have higher price impact in the non-competitive market, a result that comes in sharp contrast to models where heterogeneity appears in the traders' risk aversion, to which the equilibrium price impact is monotonically linked. 

Through a simple counterexample we show that, under a non-competitive market structure, restrictions may increase the market's welfare if equally risk averse traders have different views regarding the assets' covariance matrix. While this example does not provide general conditions rendering restrictions socially optimal, it gives an indicative situation when this happens. First, traders' disagreement creates conditions under which withdrawn restrictions for a certain trader reduces their price impact. Second, it is possible that when this trader is the one benefiting most from trading, aggregate utility gets reduced in full participation. The example also highlights that non-competitive market's welfare is generally very sensitive with respect to traders' disagreement on second moments, a fact that stems from the minimum possible dimension and level of disagreement imposed in the example.

The above results clarify that heterogeneity on second moment is structurally different than the plainer heterogeneity on traders' risk aversion. Even though both these quantities appear as factors affecting the position's risk under quadratic preferences, the effect of risk aversion to the Nash equilibrium is quantitatively and qualitatively different from the corresponding effect of the covariance matrix.

\smallskip

The results presented in the previous sections could have implications on thin markets (where relatively few traders who have price impact), and when considering imposing or withdrawing trading restrictions. When the market's efficiency and increase of its social welfare are concerned, our findings are linked to the policy maker decisions. 

In competitive markets, the market clearing equilibrium will result in a Pareto optimal allocation; this implies that any imposed restrictions on traders' pool of available assets will decrease the market's welfare. This conclusion is no longer valid when the market is thin and all traders act strategically against their price impact to the market. Therein, and for the first time in the literature, we point out that a very important factor for a thin market's welfare is the traders' disagreement on the second moment of the tradeable assets. Under this disagreement, withdrawal of restrictions doesn't necessarily increase the welfare. In particular, while difference on risk aversion also affects the market's welfare, it turns out that difference of beliefs on the second moment is a more influential feature when studying market's welfare. Note also that, although risk aversion is supposed to be a personal characteristic of each trader and independent on the tradeable assets, estimations on covariance matrices are---by definition---market-specific. This implies that, even when traders have the same risk aversion, a negative relation between restrictions and market welfare is not guaranteed for each market when traders disagree on second moments.

While there is no clear relation between traders' heterogeneity and welfare effect of market restrictions, the message to a policy maker is apparent. Assuming that a market is thin, then even if traders have similar risk aversion, disagreements about even a single component of the covariance matrix could make a withdrawal of market's restriction welfare-deteriorating. In fact, such situation could happen even in the minimum possible market setting, a result that indicates the high sensitivity of the relation between the restrictions and the welfare. 

The above imply that a more effective road to higher market's welfare is the improvement of market's competitiveness (such as the decrease of all traders' price impact), rather than simple separated withdrawals of participation restrictions.

\appendix
\section{Proofs}\label{appsec:proofs}

\subsection{Proof of Lemma~\ref{lem:non-degen}}
Set $D_{-j} \dfn \sum_{i\in I \setminus \set{j}} D_i$. Let $z\in\hil$, and assume $\inner{z}{D_{-j} z} = 0$. Then, $\inner{z}{D_i z} = 0$ for all $i\in I \setminus \set{j}$. Since $D_i\in \defppi$, we have $z_\ell = 0$ for all $\ell\in K_i$, whenever $i\in I \setminus \set{j}$. Therefore, $z_\ell = 0$ for all $\ell\in\bigcup_{i\in I \setminus \set{j}} K_i$. But, $\bigcup_{i\in I \setminus \set{j}} K_i = K$, since we assume that $|I_k| \geq 2$ for all $k\in K$.

\subsection{The fixed point equation}\label{appsubsec:fixed_point_equation}

The proof of Theorem~\ref{thm:nash} will be given in a series of subsections, starting with the present \S\ref{appsubsec:fixed_point_equation} and concluding in \S\ref{appsubsec:convergence_iter}.

Let $F : \pdefppi \mapsto \pdefppi$ be defined via
\begin{equation}\label{eq:functional}
F_i (\bX) = \pare{ B_i^{-\hili} + \proji (X_{-i})^{-1} \proji }^{-\hili}, \quad \iii.
\end{equation}
According to Definition~\ref{def:equilibrium}, the equilibrium negative demand slopes are given as the fixed points of $F$, i.e., solutions to the equation
\begin{equation}\label{eq:fixed_point}
\bX = F(\bX) \quad \Longleftrightarrow \quad X_i = \pare{ B_i^{-\hili} + \proji (X_{-i})^{-1} \proji }^{-\hili}, \quad \iii.
\end{equation}

The following lemma provides upper bounds for the functional $F$.
\begin{lem}\label{lem:upper_bounds_iter}
For each $\iii$, it holds that $F_i (\bX) \prec_{\hili} B_i$, as well as
\[
F_i(\bX) \prec_{\hili} \proji X_{-i} \proji \preceq X_{-i}.
\]
\end{lem}

\begin{proof}
For the first part of the lemma, we readily have that $B_i^{-\hili} \prec_{\hili} B_i^{-\hili} + \proji (X_{-i})^{-1} \proji$, which implies the order
\[
F_i(\bX) = \pare{B_i^{-\hili} + \proji (X_{-i})^{-1} \proji}^{-\hili} \prec_{\hili} B_i,\quad\forall\iii.
\]

For the second order, we first show that 
\begin{equation}\label{eq:proof1}
\pare{\proji X_{-i} \proji}^{- \hili}  \preceq  \proji (X_{-i})^{-1} \proji
\end{equation}
holds for each $\iii$. Indeed, upon rearranging the columns and rows of $X_{-i}$ bringing the sub-matrix corresponding to $K_i$ on the left top, write $X_{-i}$ and $X_{-i}^{-1}$ in block format as
\[
X_{-i} = \pare{ \begin{matrix}
	A & C \\ 
	C' & B
	\end{matrix}
				}, \qquad
X_{-i}^{-1} = \pare{ \begin{matrix}
	D & F \\ 
	F' & E
	\end{matrix} }
\]
where $A$ and $D$ are $(K_i \times K_i)$-dimensional. Since
\[
\pare{ \begin{matrix}
	A & C \\ 
	C' & B
	\end{matrix} } 
\pare{ \begin{matrix}
	D & F \\ 
	F' & E
	\end{matrix} } = 
\pare{ \begin{matrix}
	A D + C F' & A F + C E\\ 
	C' D + B F' & C' F + B E
	\end{matrix} },
\]
the fact that $X_{-i} X_{-i}^{-1}$ is the identity matrix gives
\[
A F + C E = 0 \  \Rightarrow \  F = - A^{-1} C E; \quad D  = A^{-1} - A^{-1} C F' = A^{-1} + A^{-1} C E C' A^{-1} \succeq A^{-1}.
\]
We then obtain~\eqref{eq:proof1}, since
\[
\pare{\proji X_{-i} \proji}^{-\hili} =
\pare{ \begin{matrix}
	A^{-1} & 0\\ 
	0 & 0
	\end{matrix} }  \preceq 
\pare{ \begin{matrix}
	D & 0\\ 
	0 & 0
	\end{matrix} }
= \proji (X_{-i})^{-1} \proji.
\] 
Then, it follows from~\eqref{eq:proof1} that $\pare{\proji X_{-i} \proji}^{- \hili}  \prec B^{-\hili} + \proji (X_{-i})^{-1} \proji$. The latter gives that 
\[
F_i(\bX) = \pare{B_i^{-\hili} + \proji (X_{-i})^{-1} \proji}^{-\hili} \prec_{\hili} \proji X_{-i} \proji.
\]
\end{proof}

We can already see that there is no hope for equilibrium in the case where there exists at least one asset that can be traded by at most two traders. This result is consistent with the corresponding no-equilibrium result in two-trader markets---see, for instance,~\cite{Kyl89},~\cite{Vay99} and~\cite{Viv11}. 

\begin{lem}\label{lem:no_equil}
If $|I_k| = 2$ holds for some $k \in K$, there exists no Nash equilibrium.
\end{lem}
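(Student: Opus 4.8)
The plan is to derive a contradiction from the strict second bound of Lemma \ref{lem:upper bounds_iter}. Suppose $|I_k| = 2$ for some $\kkk$, write $I_k = \set{i, j}$, and suppose towards a contradiction that an equilibrium $(X^*_\ell; \, \ell \in I)$ exists, i.e. $X^*_\ell = F_\ell(\bX^*)$ for each $\ell \in I$. The object to track is the standard unit vector $e_k \in \hil$. Since $k \in K_i$ and $k \in K_j$ we have $e_k \in \hili$ and $e_k \in \hil_j$, whereas for every other trader $\ell \in I \setminus \set{i,j}$ we have $k \notin K_\ell$, so $e_k$ is orthogonal to $\hil_\ell$ and hence $X^*_\ell e_k = 0$. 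Consequently $X^*_{-i} e_k = X^*_j e_k$ and, symmetrically, $X^*_{-j} e_k = X^*_i e_k$.

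Now apply Lemma \ref{lem:upper bounds_iter} at the equilibrium. For trader $i$ it gives $X^*_i = F_i(\bX^*) \prec_{\hili} \proji X^*_{-i} \proji$, so that evaluating the corresponding quadratic form at the nonzero vector $e_k \in \hili$ yields a strict inequality. Using $\proji e_k = e_k$, the symmetry of $\proji$, and $X^*_{-i} e_k = X^*_j e_k$, this reduces to $\inner{e_k}{X^*_i e_k} < \inner{e_k}{X^*_j e_k}$. Running the identical argument with the roles of $i$ and $j$ exchanged produces $\inner{e_k}{X^*_j e_k} < \inner{e_k}{X^*_i e_k}$. The two strict inequalities are incompatible, which is the desired contradiction and rules out the existence of an equilibrium.

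The only point requiring care is that the strict order $\prec_{\hili}$ must genuinely descend to a strict scalar inequality, and this is exactly why the \emph{strict} positive-definiteness in Lemma \ref{lem:upper bounds_iter} (rather than the weaker $\preceq$) is essential: strict positive definiteness of $\proji X^*_{-i} \proji - X^*_i$ on $\hili$, tested against the nonzero vector $e_k \in \hili$, gives the sharp inequality. The remainder is bookkeeping — identifying which summands of $X^*_{-i}$ annihilate $e_k$ — together with the preliminary (and automatic) observation that $X^*_{-i}, X^*_{-j} \in \defpp^\hil$ by Lemma \ref{lem:non-degen}, so that $F$ is well defined at $\bX^*$ and the bound legitimately applies.
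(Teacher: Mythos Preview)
Your proof is correct and follows essentially the same approach as the paper's own argument: both use the strict bound $F_i(\bX^*) \prec_{\hili} \proji X^*_{-i} \proji$ from Lemma~\ref{lem:upper bounds_iter}, evaluate the resulting quadratic form at the unit vector $e_k$, observe that $X^*_\ell e_k = 0$ for $\ell \notin \{i,j\}$ so that $\inner{e_k}{X^*_{-i} e_k} = \inner{e_k}{X^*_j e_k}$, and then derive the contradictory pair of strict inequalities by symmetry. Your additional remarks about why the strict order descends to a strict scalar inequality and about well-definedness of $F$ via Lemma~\ref{lem:non-degen} are sound elaborations but do not change the route taken.
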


\begin{proof}
Suppose that $\bX^*$ is Nash equilibrium, so that $\bX^* = F(\bX^*)$, and that $I_k = \set{i, j}$ holds for some $k \in K$ and $i,j\in I$ with $i\neq j$. If $e_k\in\Real^K$ stands for the zero vector with entry $1$ only in the $k$th coordinate, then (since $\proji e_k = e_k = \projj e_k$), we get from Lemma~\ref{lem:upper_bounds_iter} that
\[
X^*_i(k, k) =\inner{e_k}{X^*_i e_k} <\inner{e_k}{X^*_{-i} e_k} = X^*_{-i}(k, k) = X^*_j(k, k).
\]
A symmetric argument shows that $X^*_j(k, k) < X^*_i(k, k)$, which leads to contradiction.
\end{proof}

\subsection{Existence of fixed points}

Taking into account Lemma~\ref{lem:no_equil}, we assume hereafter that $|I_k| \geq 3$ holds for all $k\in K$. Under that assumption and based on the characterisation of the equilibrium negative demand slopes through~\eqref{eq:fixed_point}, we first show that there always exists such an equilibrium. The next step toward this goal is to show that functional $F$ defined in~\eqref{eq:functional} is  nondecreasing. For this, we need to extend the order $\preceq $ on $\pdefppi$, by defining the order
\[
\bX \equiv (X_i; \, \iii) \preceq (Y_i; \, \iii) \equiv \bY \quad \Longleftrightarrow \quad  X_i \preceq Y_i, \quad \forall \iii. 
\]
Since $\bX \preceq \bY$ implies $X_{-i} \preceq Y_{-i}$, for all $\iii$, i.e., $\proji (Y_{-i})^{-1} \proji \preceq \proji (X_{-i})^{-1} \proji$, it follows that
\[
F_i (\bX) = \pare{ B_i^{-\hili} + \proji (X_{-i})^{-1} \proji }^{-\hili} \preceq \pare{ B_i^{-\hili} + \proji (Y_{-i})^{-1} \proji }^{-\hili} = F_i (\bY),
\]
for all $\iii$. Therefore,
\begin{equation}\label{eq:monotonicity}
\bX \equiv (X_i; \, \iii) \preceq (Y_i; \, \iii) \equiv \bY \quad \Longrightarrow \quad F(\bX) \preceq F(\bY),
\end{equation}
which means that $F$ is nondecreasing. 
Furthermore, Lemma~\ref{lem:upper_bounds_iter} gives
\begin{equation}\label{eq:unifrom_domination_above}
F(\bX) \preceq \bB, \quad \forall \, \bX \equiv (X_i; \, \iii)\in\pdefppi,
\end{equation}
where $\bB\equiv (B_i; \, \iii)$. 

Define now the two sets:
\begin{equation}
\bbL \dfn \set{\bX\in\pdefppi \such \bX \preceq F (\bX)}, \quad \bbU \dfn \set{\bX\in\pdefppi \such F (\bX) \preceq \bX}
\end{equation}
and note that $\bbL \cap \bbU$ coincides with the set of fixed points of $F$. From~\eqref{eq:unifrom_domination_above}, we obtain that $t \bB\in\bbU$, for all $t\in[1,\infty)$. The next result is complementary.

\begin{lem}\label{lem:lower_bound}
There exists $\bZ\in\pdefppi$ with $\bZ \preceq \bB$ and with the property that $r Z_i \prec_{\hili} F_i (r \bZ)$ for all $i\in I$ and $r\in(0, 1]$; in particular, $r \bZ\in\bbL$, for all $r\in(0, 1]$.
\end{lem}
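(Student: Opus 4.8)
The plan is to produce an explicit $\bZ$ and to collapse the whole family of inequalities indexed by $r$ into a single one. First I would record the scaling behaviour of $F$. Since $\proji (r X_{-i})^{-1}\proji = r^{-1}\proji (X_{-i})^{-1}\proji$, taking inverses on $\hili$ (which reverses $\prec_{\hili}$) and then multiplying through by $r$ yields, for any $\bZ\in\pdefppi$ and $r\in(0,1]$, the equivalence
\begin{equation}\label{eq:lower_reduction}
r Z_i \prec_{\hili} F_i(r\bZ) \quad\Longleftrightarrow\quad r B_i^{-\hili} + \proji (Z_{-i})^{-1}\proji \prec_{\hili} Z_i^{-\hili}.
\end{equation}
Because $r B_i^{-\hili}\preceq B_i^{-\hili}$ when $r\le 1$, the right-hand side of \eqref{eq:lower_reduction} is most demanding at $r=1$; hence it suffices to construct $\bZ\preceq\bB$ satisfying
\begin{equation}\label{eq:lower_star}
B_i^{-\hili} + \proji (Z_{-i})^{-1}\proji \prec_{\hili} Z_i^{-\hili}, \qquad \iii,
\end{equation}
equivalently $Z_i\prec_{\hili}F_i(\bZ)$ (where $Z_{-i}$ is invertible by Lemma \ref{lem:non-degen}). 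Once \eqref{eq:lower_star} holds, $r\bZ\in\bbL$ for every $r\in(0,1]$ follows automatically from \eqref{eq:lower_reduction}.

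The candidate I would use is $Z_i \dfn \eps\proji$ for a small $\eps>0$ fixed at the end; evidently $Z_i\in\defppi$, and since each $B_i$ is strictly positive definite on $\hili$ one has $\eps\proji\preceq B_i$ as soon as $\eps$ does not exceed the smallest eigenvalue of $B_i$ on $\hili$, securing $\bZ\preceq\bB$. A uniform rescaling $Z_i=\lambda B_i$ will \emph{not} do, because it reduces \eqref{eq:lower_star} to $\proji (B_{-i})^{-1}\proji\prec_{\hili}(1-\lambda)B_i^{-\hili}$, an inequality not available in general; the remedy is to decouple the securities by taking $Z_i$ proportional to the projection itself. The decisive feature is then that $Z_{-i}=\eps\sum_{j\in I\setminus\set{i}}\proj_j$ is a \emph{diagonal} matrix, with $k$-th diagonal entry $\eps(|I_k|-\indic_{\{i\in I_k\}})$, which is nonzero precisely because $|I_k|\ge 3$. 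Consequently $\proji (Z_{-i})^{-1}\proji$ is the diagonal matrix supported on $K_i$ with entries $1/(\eps(|I_k|-1))$ for $k\in K_i$ (recall $i\in I_k$ whenever $k\in K_i$), so that \eqref{eq:lower_star} decouples coordinatewise.

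Testing \eqref{eq:lower_star} against an arbitrary $y\in\hili\setminus\set{0}$, it becomes
\[
\inner{y}{B_i^{-\hili}y} + \frac1\eps\sum_{k\in K_i}\frac{y_k^2}{|I_k|-1} < \frac1\eps\sum_{k\in K_i}y_k^2,
\]
that is, $\inner{y}{B_i^{-\hili}y} < \eps^{-1}\sum_{k\in K_i}\frac{|I_k|-2}{|I_k|-1}\,y_k^2$. This is exactly where $|I_k|\ge 3$ enters: it forces every coefficient $(|I_k|-2)/(|I_k|-1)$ to be at least $1/2$, so the right-hand side dominates $(2\eps)^{-1}\norm{y}^2$, while the left-hand side is at most $\norm{B_i^{-\hili}}\,\norm{y}^2$ in operator norm. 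Choosing $\eps$ smaller than $1/(2\max_{\iii}\norm{B_i^{-\hili}})$ (and than the eigenvalue bounds above) makes the inequality strict for all $i$ and all $y\ne 0$, establishing \eqref{eq:lower_star}.

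The only genuinely delicate points are the order-reversal bookkeeping with the $-\hili$ inverses in \eqref{eq:lower_reduction} and the recognition that $r=1$ is the binding case; once the diagonal structure of $Z_{-i}$ under the choice $Z_i=\eps\proji$ is in hand, the matrix inequality degenerates to the scalar count $|I_k|-2>0$, and the remaining estimates are routine. I would therefore expect the reduction to the $r=1$ inequality, rather than its verification, to carry the conceptual weight.
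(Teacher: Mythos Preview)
Your proof is correct and uses the same candidate $Z_i=\eps\proji$ as the paper. The only organisational difference is that the paper keeps $r\in(0,1]$ general throughout and bounds $B_i^{-\hili}+\proji(rZ_{-i})^{-1}\proji$ directly (using $Z_{-i}\succeq 2\eps\,\id$ from $|I_k\setminus\{i\}|\ge 2$, then inverting), whereas you first collapse the family to the single case $r=1$ via the scaling identity \eqref{eq:lower_reduction} and then carry out the same diagonal estimate coordinatewise; this reduction is a nice streamlining but not a different method.
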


\begin{proof}
Pick $\alpha > 0$ small enough so that
\[
4 \alpha \proji \preceq B_i, \quad \forall \iii.
\]
Let $Z_i = \alpha \proji$, for all $\iii$, and note that $\bZ \preceq (1/4) \bB \preceq \bB$. The fact that $|I_k \setminus \set{i}| \geq 2$ holds for all $k\in K$ implies
\[
Z_{-i} = \alpha \sum_{i\in I \setminus \set{i}} \projj \succeq  2 \alpha \id_\hil.
\]
Fix $r\in(0, 1]$. Since $B_i^{- \hili} \preceq (4 \alpha)^{-1} \proji$, we have 
\[
B_i^{-\hili} + \proji (r Z_{-i})^{-1} \proji \preceq \frac{1}{2 \alpha} \pare{\frac{1}{2} + \frac{1}{r}} \proji = \frac{2 + r}{2 r \alpha} \proji.
\]
for all $\iii$. Therefore, we obtain that
\[
F_i(r \bZ) =  \pare{ B_i^{-\hili} + \proji (r Z_{-i})^{-1} \proji }^{-\hili} \succeq \frac{2}{2 + r} r \alpha \proji  \succ_{\hili} r Z_i
\]
for all $\iii$, which in particular shows that $r \bZ \preceq F (r \bZ)$, i.e., $r \bZ\in\bbL$.
\end{proof}

In the lemma below, $F^{\circ n}$ stands for the $n$-fold composition of the function $F$ with itself.
 
\begin{lem}\label{lem:sandwich_fixed_points}
Suppose that $\bX\in\bbU$ and $\bY\in\bbL$ are such that $\bY \preceq \bX$, and form sequences $(\bX^n;\, \nin)$ and $(\bY^n;\, \nin)$ via $\bX^n \dfn F^{\circ n} (\bX)$ and $\bY^n \dfn F^{\circ n} (\bY)$ for $n\in\Natural$. Then:
\begin{enumerate}
	\item The sequence $(\bX^n; \, \nin)$ is nondecreasing, the sequence $(\bY^n; \, \nin)$ is nonincreasing, and $\bY^n \preceq \bX^n$ holds for all $n\in\Natural$; in particular, the limits $\bX^* \dfn \limn \bX^n\in\pdefppi$ and $\bY^* \dfn \limn \bY^n\in\pdefppi$ exist, and satisfy $\bY \preceq \bY^* \preceq \bX^* \preceq \bX$.
	\item It holds that $\bX^* = F(\bX^*)$ and $\bY^* = F(\bY^*)$, i.e., $\bX^*$ and $\bY^*$ are equilibrium price impacts.
	\item Whenever $\bZ\in\pdefppi$ is such that $\bZ = F(\bZ)$ and $\bY \preceq  \bZ \preceq \bX$, then $\bY^* \preceq  \bZ \preceq \bX^*$
	\end{enumerate}
\end{lem}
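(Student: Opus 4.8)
The three claims are an instance of the Knaster--Tarski/Kleene fixed-point scheme transported to the cone $\defp$, so the plan is to run the two iterations as monotone, order-bounded sequences squeezed between $\bY$ and $\bX$, and then upgrade order convergence to genuine matrix convergence. The only nontrivial points are to keep the limits inside $\pdefppi$ (so that $F$ may be evaluated there) and to pass the fixed-point equation through the limit; both hinge on the uniform lower bound supplied by $\bY$.

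For item (1), I would argue by induction from \eqref{eq:monotonicity}. Membership $\bX \in \bbU$ gives $\bX^{1}=F(\bX)\preceq\bX$, and applying the monotone map $F$ repeatedly yields $\bX^{n+1}=F(\bX^{n})\preceq F(\bX^{n-1})=\bX^{n}$, so the supersolution iterates descend monotonically; symmetrically $\bY\in\bbL$ makes $(\bY^{n})$ ascend. Applying \eqref{eq:monotonicity} $n$ times to $\bY\preceq\bX$ gives $\bY^{n}\preceq\bX^{n}$, whence $\bY\preceq\bY^{n}\preceq\bX^{n}\preceq\bX$ for every $\nin$. To turn this into convergence, I would fix $v\in\hil$ and note that componentwise $n\mapsto\inner{v}{X^{n}_i v}$ is a bounded monotone real sequence, hence convergent; polarisation then gives entrywise convergence of the symmetric forms $X^{n}_i$ to a symmetric limit $X^{*}_i$, and likewise for $\bY^{*}$. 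The sandwich $Y_i\preceq X^{*}_i$ with $Y_i\in\defppi$ forces $\inner{y}{X^{*}_i y}\geq\inner{y}{Y_i y}>0$ for $0\neq y\in\hili$, and the identity $X^{n}_i\proji=X^{n}_i$ passes to the limit; hence $X^{*}_i\in\defppi$, i.e.\ $\bX^{*}\in\pdefppi$, and similarly $\bY^{*}\in\pdefppi$.

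For item (2), I would use continuity of $F$. Each $X^{*}_j$ lies in $\defppi$ by item (1), so Lemma~\ref{lem:non-degen} gives $X^{*}_{-i}=\sum_{j\in I\setminus\{i\}}X^{*}_j\in\defpp^{\hil}$, which is invertible on all of $\hil$; since $F$ is assembled from addition, compression by $\proji$, and matrix inversion --- operations continuous wherever the matrices being inverted are nonsingular --- letting $n\to\infty$ in $\bX^{n+1}=F(\bX^{n})$ gives $\bX^{*}=F(\bX^{*})$, and identically $\bY^{*}=F(\bY^{*})$. For item (3), given any fixed point $\bZ=F(\bZ)$ with $\bY\preceq\bZ\preceq\bX$, applying $F$ exactly $n$ times and using monotonicity together with $F^{\circ n}(\bZ)=\bZ$ yields $\bY^{n}=F^{\circ n}(\bY)\preceq\bZ\preceq F^{\circ n}(\bX)=\bX^{n}$; sending $n\to\infty$ gives $\bY^{*}\preceq\bZ\preceq\bX^{*}$. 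The main obstacle is the limit step of item (2): matrix inversion is discontinuous at singular matrices, so everything rests on Lemma~\ref{lem:non-degen} guaranteeing that $X^{*}_{-i}$ stays nonsingular, which in turn relies on the lower bound $\bY\in\pdefppi$ keeping the limiting slopes strictly positive definite on each $\hili$.
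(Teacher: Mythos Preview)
Your proof is correct and follows the same monotone-iteration approach as the paper, supplying more detail than the paper does on why the limits exist and remain in $\pdefppi$ (via polarisation and the sandwich $\bY\preceq\bX^{*}\preceq\bX$) and on why $F$ is continuous at the limit (via Lemma~\ref{lem:non-degen}). Note that the statement as written has the monotonicity directions of $(\bX^n)$ and $(\bY^n)$ swapped --- you correctly argue that the supersolution iterates descend and the subsolution iterates ascend, which is exactly what the chain $\bY \preceq \bY^{*} \preceq \bX^{*} \preceq \bX$ requires.
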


\begin{proof}
Applying $F$ iteratively to the inequality $\bY \preceq \bX$, and using the facts that $\bX\in\bbU$ and $\bY\in\bbL$ and the monotonicity property~\eqref{eq:monotonicity}, the claims of statement (1) immediately follow. 
	
Since the monotone limits $\bX^* \dfn \limn \bX^n$ and $\bY^* \dfn \limn \bY^n$ exist and are $\pdefppi$-valued, continuity of $F$ gives
\[
\bX^* = \limn \bX^n = \limn F(\bX^{n-1}) = F( \limn \bX^{n-1}) = F(\bX^*),
\]
and similarly that $\bY^* = F(\bY^*)$, which is statement (2).
	
Finally, take $\bZ\in\bbL \cap \bbU$ with $\bY \preceq  \bZ \preceq \bX$. Using the results and notation of statements (1) and (2) with $(\bY, \bZ)$ replacing $(\bY, \bX)$, we obtain $\bY^* \preceq \bZ^* = \bZ$. Similarly, with $(\bZ, \bX)$ replacing $(\bY, \bX)$, we obtain $\bZ = \bZ^* \preceq \bX^*$.
\end{proof}

The next result shows in particular that a \emph{globally maximal} solution to~\eqref{eq:fixed_point} exists. 

\begin{lem}\label{lem:maximal_fixed_point}
Let $\bX = \bB$. Form a sequence $(\bX^n;\, \nin)$ via $\bX^n = F^{\circ n} (\bX)$, for all $\nin$. Then, the limit $\bX^* \dfn \limn \bX^n\in\pdefppi$ exists, it holds that $\bX^* = F(\bX^*)$, and for any $\bY\in\bbL$ it holds that $\bY \preceq \bX^*$.
\end{lem}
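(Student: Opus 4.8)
The plan is to run the iteration from the top element $\bB$, recognise $(\bX^n; \, \nin)$ as a nonincreasing sequence pinned above a strictly positive floor, pass to the limit using continuity of $F$, and then extract maximality from a one-line monotonicity induction. First I would observe that $\bB \in \bbU$: indeed, \eqref{eq:unifrom_domination_above} gives $F(\bB) \preceq \bB$, i.e. $\bX^1 \preceq \bX^0$. Applying the monotone map $F$ repeatedly to this inequality yields, by induction, $\bX^{n+1} = F(\bX^n) \preceq F(\bX^{n-1}) = \bX^n$, so $(\bX^n)$ is nonincreasing in the order $\preceq$, with $\bX^n \preceq \bB$ for every $\nin$.

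The crucial point is to keep the limit inside $\pdefppi$, preventing the descending iterates from degenerating. For this I would invoke Lemma \ref{lem:lower_bound} with $r = 1$, obtaining $\bZ \in \pdefppi$ with $\bZ \preceq \bB$ and $\bZ \in \bbL$ (that is, $\bZ \preceq F(\bZ)$). Since $\bZ \in \bbL$ the iterates $F^{\circ n}(\bZ)$ are nondecreasing, so $\bZ \preceq F^{\circ n}(\bZ)$; and since $\bZ \preceq \bX^0 = \bB$, monotonicity of $F$ gives $F^{\circ n}(\bZ) \preceq F^{\circ n}(\bB) = \bX^n$. Hence $\bZ \preceq \bX^n \preceq \bB$ for all $n$. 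Now $(\bX^n)$ is a nonincreasing sequence of symmetric forms bounded below by the fixed $\bZ$, so for each $x$ the scalars $\inner{x}{X^n_i x}$ are nonincreasing and bounded below by $\inner{x}{Z_i x} \geq 0$, hence convergent; by polarisation the entries converge, and the limit $\bX^* \dfn \limn \bX^n$ exists, inherits the $K_i \times K_i$ block structure, and satisfies $\bZ \preceq \bX^* \preceq \bB$. Because each $Z_i$ is strictly positive definite on $\hili$ and $Z_i \preceq X^*_i$, each $X^*_i$ is strictly positive definite on $\hili$, so $\bX^* \in \pdefppi$. In particular $X^*_{-i}$ is invertible by Lemma \ref{lem:non-degen}, so $F$ is continuous at $\bX^*$ and $\bX^* = \limn F(\bX^{n-1}) = F(\limn \bX^{n-1}) = F(\bX^*)$. (This first half is precisely statements (1)--(2) of Lemma \ref{lem:sandwich_fixed_points} applied with $\bX = \bB \in \bbU$ and $\bY = \bZ \in \bbL$, and could be cited as such.)

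It remains to establish global maximality. Let $\bY \in \bbL$ be arbitrary. Then $\bY \preceq F(\bY) \preceq \bB$ by \eqref{eq:unifrom_domination_above}, so $\bY \preceq \bX^0$. Inductively, if $\bY \preceq \bX^n$, then $\bY \preceq F(\bY) \preceq F(\bX^n) = \bX^{n+1}$, the first inequality because $\bY \in \bbL$ and the second by monotonicity of $F$. Hence $\bY \preceq \bX^n$ for every $n$, and letting $n \to \infty$ gives $\bY \preceq \bX^*$. Since every fixed point of $F$ lies in $\bbL$, $\bX^*$ dominates all solutions of \eqref{eq:fixed_point}, i.e. it is the globally maximal one. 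The only genuine obstacle is the second paragraph: guaranteeing that the descending iterates do not collapse to a degenerate limit outside $\pdefppi$, which is exactly where the uniform strictly positive lower bound furnished by Lemma \ref{lem:lower_bound} is indispensable; once that floor is in place, the remainder is order-theoretic bookkeeping together with continuity of matrix inversion on the nondegenerate cone.
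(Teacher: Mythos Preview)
Your proof is correct and follows essentially the same route as the paper's: the paper simply cites $\bB \in \bbU$, Lemma \ref{lem:lower_bound} for the floor $\bZ \in \bbL$ with $\bZ \preceq \bB$, and then invokes Lemma \ref{lem:sandwich_fixed_points} for existence of the limit and the fixed-point property, while for maximality it again appeals to Lemma \ref{lem:sandwich_fixed_points} after noting $\bY \preceq F(\bY) \preceq \bB$. You unpack those invocations inline (the monotone-bounded convergence, polarisation, continuity at the nondegenerate limit, and the one-line induction $\bY \preceq \bX^n \Rightarrow \bY \preceq \bX^{n+1}$), which is fine and arguably clearer, but the underlying argument is identical.
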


\begin{proof}
Recall that $\bX = \bB\in\bbU$ and that there exist $\bZ\in\bbL$ with $\bZ \preceq \bX$ by Lemma~\ref{lem:lower_bound}. Therefore, Lemma~\ref{lem:sandwich_fixed_points} gives that $\bX^*$ exists and $F(\bX^*) = \bX^*$. If $\bY\in\bbL$, then $\bY \preceq F(\bY) \preceq \bB = \bX$, and again by Lemma~\ref{lem:sandwich_fixed_points} we get that $\bY \preceq \bX^*$.
\end{proof}

\subsection{Uniqueness of the fixed point}

Lemma~\ref{lem:maximal_fixed_point} in fact shows that a \emph{maximal} solution $\bX^*$ of~\eqref{eq:fixed_point} exists; combined with Lemma~\ref{lem:sandwich_fixed_points}, it follow that, whenever $\bY\in\bbL$, there exists a \emph{minimal} solution of~\eqref{eq:fixed_point} that is dominated below by $\bY$ and above by $\bX^*$. By Lemma~\ref{lem:lower_bound}, there exists $\bZ\in\pdefppi$ such that $r \bZ\in\bbL$, for all $r\in(0, 1]$. We shall show below that the minimal fixed point dominated below by $r \bZ$ coincides with $\bX^*$ for all $r\in(0, 1]$; since $r\in(0, 1]$ is arbitrary, this will also imply (global) uniqueness of the fixed point.

In the sequel, along with the maximal solution $\bX^* = (X^*_i; \, \iii)$ to~\eqref{eq:fixed_point} of Lemma~\ref{lem:maximal_fixed_point}, we take $\bZ\in\pdefppi$ as in Lemma~\ref{lem:lower_bound}, fix $r\in (0, 1]$, and let $\bX \equiv (X_i; \, \iii)$ be the minimal fixed point that is bounded below by $r \bZ$. We shall show that, necessarily, $\bX = \bX^*$.

Define $\bH \dfn \bX^* - \bX$; by statement (3) of Lemma~\ref{lem:sandwich_fixed_points}, $H_i \dfn (X^*_i - X_i)\in\defp$ holds for all $\iii$. Another application of statement (3) of Lemma~\ref{lem:sandwich_fixed_points} gives $r Z_i \prec_{\hili} F_i(r \bZ) \preceq X_i$ for all $i\in I$, which implies that we can pick a sufficiently small $\epsilon > 0$ such that $r \bZ \preceq \bX - \epsilon \bH$. Consider now the mapping
\[
	[-\epsilon, 1] \ni t \mapsto \bX (t) \dfn (\bX + t \bH)\in\pdefppi,
\]
and note that $\bX(0) = \bX$ and $\bX(1) = \bX^*$.

It follows directly from definition of functional $F$, that the mapping
\[
	(-\epsilon, 1) \ni t \mapsto \Phi(t) \equiv F (\bX(t))\in\pdefppi
\]
is twice continuously differentiable. The next result shows that it is, in fact, ``\emph{concave}''.

\begin{lem}\label{lem:concavity}
With $\Phi(t) \dfn F (\bX(t))$ for $t\in(-\epsilon, 1)$, it holds that
\[
- \frac{\partial^2  \Phi (t)}{\partial t^2}\in\pdefp, \quad \forall t\in(-\epsilon, 1).
\]
It follows that the mapping $(-\epsilon, 1) \ni t \mapsto - \partial \Phi (t) / \partial t$ is nondecreasing in the order of $(\defp)^I$.
\end{lem}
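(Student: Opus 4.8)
The plan is to argue componentwise and reduce the assertion $-\Phi_i''(t)\in\defp$ (for each $\iii$) to a single Loewner inequality on $\hili$, which will then follow from an explicit matrix factorization. Throughout I write $H_{-i}\dfn\sum_{j\in I\setminus\set{i}}H_j$, which lies in $\defp$ since each $H_j\in\defp$; note that $X_{-i}(t)=X_{-i}+tH_{-i}$ is \emph{affine} in $t$ with derivative $H_{-i}$, and that it lies in $\defpp^\hil$ on $(-\epsilon,1)$ by Lemma \ref{lem:non-degen} (it is a perturbation of a convex combination of $X_{-i}$ and $X^*_{-i}$). Set $P_i(t)\dfn(X_{-i}(t))^{-1}$ and $M_i(t)\dfn B_i^{-\hili}+\proji P_i(t)\proji\in\defppi$, so that $\Phi_i(t)=M_i(t)^{-\hili}$.

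First I would differentiate. Because $t\mapsto X_{-i}(t)$ is affine, the standard formulas for derivatives of a matrix inverse give $P_i'=-P_iH_{-i}P_i$ and $P_i''=2P_iH_{-i}P_iH_{-i}P_i$; since $B_i^{-\hili}$ is constant,
\[
M_i'=-\proji P_iH_{-i}P_i\proji, \qquad M_i''=2\proji P_iH_{-i}P_iH_{-i}P_i\proji .
\]
Differentiating the outer $-\hili$ inverse twice and collecting terms yields the identity
\[
-\Phi_i''=M_i^{-\hili}\pare{M_i''-2\,M_i'\,M_i^{-\hili}\,M_i'}M_i^{-\hili}.
\]
Since the congruence $A\mapsto M_i^{-\hili}AM_i^{-\hili}$ preserves $\preceq$ and all operators involved are supported on $\hili$, it suffices to prove that the bracketed term is nonnegative-definite on $\hili$, i.e. $2\,M_i'\,M_i^{-\hili}\,M_i'\preceq M_i''$.

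The key reduction is to replace $M_i^{-\hili}$ by a tractable bound. Because $B_i^{-\hili}\in\defp$, we have $M_i\succeq\proji P_i\proji$ on $\hili$, with both sides in $\defppi$ (the compression of $P_i\in\defpp^\hil$ to $\hili$ is positive definite); antitonicity of inversion then gives $M_i^{-\hili}\preceq(\proji P_i\proji)^{-\hili}$. Conjugating this by the symmetric map $M_i'$ and cancelling the common factor $2$, it is enough to establish
\[
(\proji P_iH_{-i}P_i\proji)(\proji P_i\proji)^{-\hili}(\proji P_iH_{-i}P_i\proji)\preceq \proji P_iH_{-i}P_iH_{-i}P_i\proji \quad\text{on }\hili .
\]
I expect this to be the main obstacle. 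The clean way to dispatch it is to order coordinates so that $K_i$ comes first, whence $\proji(\cdot)\proji$ extracts the top-left block and $(\proji P_i\proji)^{-\hili}$ inverts that block; by the Schur-complement criterion the displayed inequality is equivalent to nonnegative-definiteness of a compressed $2\times2$ block form, which descends (via congruence by the inclusion $\hili\hookrightarrow\hil$ on each diagonal block) from the global identity
\[
\pare{\begin{matrix} P_iH_{-i}P_iH_{-i}P_i & P_iH_{-i}P_i \\ P_iH_{-i}P_i & P_i \end{matrix}}=\pare{\begin{matrix} P_iH_{-i} \\ \id \end{matrix}}P_i\pare{\begin{matrix} P_iH_{-i} \\ \id \end{matrix}}^{\top}\succeq 0,
\]
valid since $P_i\in\defpp^\hil$; equivalently, the Schur complement of the bottom-right corner $P_i$ is identically zero, so the big matrix is nonnegative-definite. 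Compressing both diagonal blocks to $\hili$ preserves this, giving the required inequality.

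Finally, having shown $-\Phi_i''(t)\in\defp$ for every $\iii$ and every $t\in(-\epsilon,1)$, the concluding statement is immediate: for $s\le t$ one has $-\Phi_i'(t)+\Phi_i'(s)=\int_s^t\pare{-\Phi_i''(u)}\,\ud u\in\defp$, the integral of a $\defp$-valued map again lying in $\defp$; hence $t\mapsto-\Phi_i'(t)$ is nondecreasing in each coordinate, i.e. nondecreasing in the order of $(\defp)^I$.
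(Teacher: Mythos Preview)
Your argument is correct. The second-derivative identity you write,
\[
-\Phi_i'' \;=\; M_i^{-\hili}\bigl(M_i'' - 2\,M_i'\,M_i^{-\hili}\,M_i'\bigr)M_i^{-\hili},
\]
agrees with what the paper computes (in different notation), and your positivity step via the trivial Loewner bound $M_i^{-\hili}\preceq(\proji P_i\proji)^{-\hili}$ followed by the Schur-complement/compression argument for the block matrix $\bigl(\begin{smallmatrix}P_iH_{-i}\\ \id\end{smallmatrix}\bigr)P_i\bigl(\begin{smallmatrix}P_iH_{-i}\\ \id\end{smallmatrix}\bigr)^\top\succeq 0$ is valid: congruence by $\mathrm{diag}(\proji,\proji)$ preserves positive semidefiniteness, and on $\hili\oplus\hili$ the bottom-right block $\proji P_i\proji$ is invertible, so the Schur complement inequality you need follows.

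The paper takes a shorter and more explicit route at the positivity step. After the same differentiation it rewrites the second derivative as
\[
\frac{\partial^2\Phi_i}{\partial t^2} \;=\; -\,2\,\Phi_i\,D_i\,(X_{-i}-\Phi_i)\,D_i\,\Phi_i, \qquad D_i \dfn X_{-i}^{-1}H_{-i}X_{-i}^{-1},
\]
and then simply invokes Lemma~\ref{lem:upper bounds_iter} (i.e.\ $\Phi_i=F_i(\bX(t))\preceq X_{-i}(t)$) to conclude that the central factor $X_{-i}-\Phi_i$ lies in $\defp$; positive semidefiniteness of $-\Phi_i''$ is then immediate from the congruence form. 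In effect, the paper spends its Schur-complement work once, inside the proof of Lemma~\ref{lem:upper bounds_iter}, and reuses that here; you instead avoid that lemma entirely and build the inequality directly with a fresh block-matrix factorisation. Your route is more self-contained but longer; the paper's is a one-line application of an already established bound.
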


\begin{proof}
For all $t\in(-\epsilon, 1)$ and $\iii$, it holds that
\[
\frac{\partial}{\partial t} X_{-i} (t)^{-1} = - X_{-i} (t)^{-1}  \pare{\frac{\partial}{\partial t} X_{-i} (t)} X_{-i} (t)^{-1} = - X_{-i} (t)^{-1}  H_{-i} X_{-i} (t)^{-1}.
\]
Since $\Phi_i (t) = \pare{(B_i)^{-\hili} + \proj_i X_{-i} (t)^{-1} \proj_i }^{-\hili}\in\defppi$ holds for $t\in(-\epsilon, 1)$ and $\iii$, it follows that
\begin{align*}
\frac{\partial \Phi_i (t)}{\partial t}  &= - \Phi_i (t) \pare{\frac{\partial}{\partial t}  \pare{\Phi_i (t)^{-\hili}} } \Phi_i (t) \\
&= - \Phi_i (t) \pare{\proj_i \frac{\partial}{\partial t} X_{-i} (t)^{-1} \proj_i  } \Phi_i (t) \\
&= \Phi_i (t) \proj_i X_{-i} (t)^{-1}  H_{-i} X_{-i} (t)^{-1} \proj_i  \Phi_i (t) \\
&= \Phi_i (t) X_{-i} (t)^{-1}  H_{-i} X_{-i} (t)^{-1} \Phi_i (t),
\end{align*}
where the last line follows because $\Phi_i (t) \proj_i = \Phi_i (t) = \proj_i \Phi_i (t)$. Call
\[
t \ni (-\epsilon, 1) \mapsto D_i (t) \dfn  X_{-i} (t)^{-1}  H_{-i} X_{-i} (t)^{-1}\in\defp, \quad \forall \, \iii.
\]
Since $X_{-i} (t) D_i (t) X_{-i} (t)=H_{-i}$ is a constant matrix as a function of $t\in(-\epsilon, 1)$, we obtain
\[
\pare{\frac{\partial}{\partial t} X_{-i} (t)} D_i (t) X_{-i} (t) + X_{-i} (t) \pare{\frac{\partial}{\partial t} D_i (t)} X_{-i} (t) + X_{-i} (t) D_i (t) \pare{\frac{\partial}{\partial t} X_{-i} (t)}  = 0,
\]
i.e.,
\[
H_{-i} D_i (t) X_{-i} (t) + X_{-i} (t) \pare{\frac{\partial}{\partial t} D_i (t)} X_{-i} (t) + X_{-i} (t) D_i (t) H_{-i} (t) = 0,
\]
which gives
\begin{align*}
\frac{\partial}{\partial t} D_i (t) &= - X_{-i} (t)^{-1} H_{-i} D_i (t) - D_i (t) H_{-i} X_{-i} (t)^{-1} \\
&= -2 X_{-i} (t)^{-1}  H_{-i} X_{-i}^{-1} (t) H_{-i} X_{-i} (t)^{-1} \\
&= -2 D_i (t) X_{-i} (t) D_i (t).
\end{align*}
Therefore, since $\partial \Phi_i (t)/\partial t  = \Phi_i (t) D_i(t) \Phi_i (t)$,
we obtain
\begin{align*}
\frac{\partial^2 \Phi_i (t)}{\partial t^2}  &= \frac{\partial \Phi_i (t)}{\partial t} D_i(t) \Phi_i (t) + \Phi_i (t) \left(\frac{\partial D_i (t)}{\partial t}\right) \Phi_i (t) +  \Phi_i (t) D_i(t)  \frac{\partial \Phi_i (t)}{\partial t} \\
&= 2 \Phi_i (t) D_i(t) \Phi_i (t) D_i(t) \Phi_i (t) - 2 \Phi_i (t) D_i(t) X_{-i}(t) D_i(t) \Phi_i (t) \\
&= - 2 \Phi_i (t) D_i(t) (X_{-i}(t) - \Phi_i (t)) D_i(t) \Phi_i (t).
\end{align*}
From Lemma~\ref{lem:upper_bounds_iter}, we have that $\Phi_i (t) \preceq X_{-i}(t)$. Also, since $\Phi_i (t) D_i(t)\in\defp$, it is clear that
\[
- \frac{\partial^2  \Phi_i (t)}{\partial t^2}\in\defp, \quad \forall \, t\in(-\epsilon, 1) \quad \text{and} \quad \forall \, \iii.
\]

For $- \epsilon \leq t_1 < t_2 \leq 1$, the above implies that
\[
\frac{\partial \Phi_i (t)}{\partial t} \Big|_{t = t_1} - \frac{\partial \Phi_i (t)}{\partial t} \Big|_{t = t_2} =\int_{t_1}^{t_2} - \frac{\partial^2  \Phi_i (t)}{\partial t^2} \ud t\in\defp, \quad \forall \iii,
\]
completing the argument.
\end{proof}

Recall that $\bX \equiv (X_i; \, \iii)$ stands for the minimal fixed point of $F$ that is bounded below by $r \bZ$, and our aim is to show that $\bX=\bX^*$.

\begin{lem}\label{lem:uniqueness_key}
When $-\epsilon \leq s < 0$, it holds that $\bX(s)\in\bbU$, i.e., $F(\bX(s)) \preceq \bX(s)$.
\end{lem}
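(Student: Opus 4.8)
The plan is to leverage the concavity of $\Phi$ furnished by Lemma \ref{lem:concavity}, together with the boundary data coming from the fact that both endpoints of the segment $t \mapsto \bX(t)$ are fixed points of $F$. First I would record these boundary identities: since $\bX(0) = \bX$ and $\bX(1) = \bX^*$ both solve \eqref{eq:fixed_point}, we have $\Phi(0) = F(\bX) = \bX = \bX(0)$ and $\Phi(1) = F(\bX^*) = \bX^* = \bX(1)$. Setting $\Psi_i(t) \dfn \Phi_i(t) - X_i(t)$ for each $\iii$, these read $\Psi_i(0) = 0 = \Psi_i(1)$. Because $t \mapsto \bX(t)$ is affine, its second derivative vanishes, so $\partial^2 \Psi_i(t)/\partial t^2 = \partial^2 \Phi_i(t)/\partial t^2$, and Lemma \ref{lem:concavity} yields $- \partial^2 \Psi_i(t)/\partial t^2 \in \defp$ for all $t \in (-\epsilon, 1)$ and all $\iii$.

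Next I would reduce the matrix statement to a one-dimensional concavity argument. Fix $y \in \hil$ and $\iii$, and set $\phi(t) \dfn \inner{y}{\Psi_i(t) y}$. Then $\phi(0) = \phi(1) = 0$ and $\phi''(t) = \inner{y}{(\partial^2 \Psi_i(t)/\partial t^2) y} \leq 0$, so $\phi$ is a real-valued concave function on $(-\epsilon, 1)$ (continuous up to $t = -\epsilon$, since $F$ is continuous and $\bX(-\epsilon) \in \pdefppi$) that vanishes at the two points $0$ and $1$. For $s \in [-\epsilon, 0)$ I would write $0$ as the convex combination $0 = \lambda s + (1 - \lambda)\cdot 1$ with $\lambda = 1/(1 - s) \in (0, 1)$; concavity then gives $0 = \phi(0) \geq \lambda \phi(s) + (1 - \lambda)\phi(1) = \lambda \phi(s)$, whence $\phi(s) \leq 0$. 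As $y \in \hil$ is arbitrary, this shows $\Psi_i(s) \preceq 0$, i.e. $F_i(\bX(s)) = \Phi_i(s) \preceq X_i(s)$ for every $\iii$, which is exactly $F(\bX(s)) \preceq \bX(s)$.

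It then remains to confirm that $\bX(s) \in \pdefppi$, so that the assertion $\bX(s) \in \bbU$ is meaningful. For $s \in [-\epsilon, 0)$ one has $X_i(s) = \frac{-s}{\epsilon} X_i(-\epsilon) + \frac{\epsilon + s}{\epsilon} X_i$, a convex combination of $X_i(-\epsilon)$ and $X_i(0) = X_i$; the first dominates $r Z_i \succ_{\hili} 0$ by the choice of $\epsilon$ (recall $r\bZ \preceq \bX - \epsilon \bH = \bX(-\epsilon)$) and the second lies in $\defppi$, so, $\defppi$ being a convex cone, the combination lies in $\defppi$ as well. This holds for each $\iii$, giving $\bX(s) \in \pdefppi$.

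I expect the only subtle point to be conceptual rather than computational: the concavity furnished by Lemma \ref{lem:concavity} is matrix-valued, and the passage to the order $\preceq$ \emph{outside} the interval $[0,1]$ is not as transparent as in the scalar case, so it must be routed through the scalar test functions $t \mapsto \inner{y}{\Psi_i(t) y}$, where the elementary fact that a concave function vanishing at two points is nonpositive beyond them applies directly. Everything else is immediate from Lemma \ref{lem:concavity} and the two fixed-point boundary conditions.
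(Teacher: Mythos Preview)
Your proof is correct and rests on the same two ingredients the paper uses: the concavity of $\Phi$ from Lemma~\ref{lem:concavity} and the two fixed-point boundary conditions $\Phi(0)=\bX(0)$, $\Phi(1)=\bX(1)$. The packaging, however, differs. The paper stays at the matrix level throughout: it writes $\bH=\int_0^1\Phi'(t)\,\ud t$, uses monotonicity of $\Phi'$ to deduce $\Phi'(0)\succeq\bH$ via an integral-average comparison, and then integrates $\Phi'(t)\succeq\bH$ over $[s,0]$ to obtain $F(\bX(0))-F(\bX(s))\succeq -s\bH$, i.e.\ $F(\bX(s))\preceq\bX(s)$. You instead form $\Psi_i=\Phi_i-X_i(\cdot)$, reduce to scalar test functions $t\mapsto\inner{y}{\Psi_i(t)y}$, and invoke the elementary fact that a concave function vanishing at $0$ and $1$ is nonpositive to the left of $0$. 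Your route is a bit more transparent---the two-zeros picture makes the conclusion immediate---while the paper's route keeps the matrix order explicit but needs the small side step of extracting $\Phi'(0)\succeq\bH$ from the average. The arguments are equivalent; one minor point is that the paper defines $\Phi$ only on the open interval $(-\epsilon,1)$, so your use of $\Phi(1)$ should, strictly, be read as the continuous extension $\lim_{t\uparrow 1}\Phi(t)=F(\bX^*)=\bX^*$, which is harmless.
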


\begin{proof}
Note that
\[
\bH = \bX^* - \bX = F (\bX(1)) - F (\bX(0)) =\int_0^1 \frac{\partial F (\bX (t))}{\partial t} \ud t.
\]
In view of the fact that $(-\epsilon, 1) \ni t \mapsto - \partial F (\bX (t)) / \partial t$ is nondecreasing as follows from Lemma~\ref{lem:concavity}, we have
\[
\bH =\int_0^1 \frac{\partial F (\bX (t))}{\partial t} \ud t \preceq \frac{1}{s}\int_{0}^s \frac{\partial F (\bX (t))}{\partial t} \ud t = \frac{1}{s} \pare{F (\bX(s)) - F (\bX(0))}, \quad s\in(0,1]. 
\]
Therefore,
\[
\frac{\partial F (\bX (s))}{\partial s} \Big|_{s = 0} = \lim_{s \downarrow 0} \frac{ F (\bX(s)) -  F (\bX(0))}{s} \succeq \bH.
\]
Using again the fact that $(-\epsilon, 1) \ni t \mapsto - \partial F (\bX (t)) / \partial t$ is nondecreasing, which implies that $ \partial F (\bX (t)) / \partial t \succeq \bH$ holds for all $t\in(-\epsilon, 0)$, we obtain
\[
\bX - F(\bX (s)) = F(\bX(0)) - F(\bX(s)) =\int_s^0 \frac{\partial F (\bX (t))}{\partial t} \ud t \succeq -s \bH, \quad \forall \, s\in[-\epsilon, 0),
\]
which shows that $F (\bX(s)) \preceq \bX + s \bH = \bX (s)$ holds when $-\epsilon \leq s < 0$.
\end{proof}

We are now ready to complete the proof of uniqueness. Recall that $\epsilon > 0$ was picked such that $r \bZ \preceq \bX(-\epsilon) $. Since additionally $r \bZ\in\bbL$ and $\bX(-\epsilon)\in\bbU$, Lemma~\ref{lem:sandwich_fixed_points} gives existence of a fixed point $\bY$ with the property $r \bZ \preceq \bY \preceq \bX(-\epsilon)$. Since $\bX$ is the smallest fixed point dominated below by $r \bZ$, this would give $\bX \preceq \bY$ which together with $\bY \preceq \bX - \epsilon \bH$ and $\bH\in\pdefp$ gives $\bH = 0$, i.e., $\bX = \bX^*$.

\subsection{Convergence to solutions through iteration}\label{appsubsec:convergence_iter}

Now that uniqueness has been established, we can show that the iterative procedure will always converge to the unique root and hence finish the proof of Theorem~\ref{thm:nash}.

\begin{lem}\label{lem:iteration-fixed-point}
For an arbitrary $\bX^0\in\pdefppi$, form a sequence $(\bX^n;\, \nin)$ by induction, asking that $\bX^n = F (\bX^{n-1})$, for all $\nin$. Then, it holds that
\[
\limn \bX^n = \bX^*.
\]
Furthermore, if $\bX^0\in\bbL$, the sequence $(\bX^n; \, \nin)$ is nondecreasing, while if $\bX^0\in\bbU$, the sequence $(\bX^n; \, \nin)$ is nonincreasing. 
\end{lem}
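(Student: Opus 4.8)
The plan is to dispose of the two monotonicity assertions first, since they follow at once from the structure already in place, and then to obtain convergence from an arbitrary starting point by sandwiching the sequence between two iterated sequences already known to converge to $\bX^*$, invoking Lemma~\ref{lem:sandwich_fixed_points} together with the uniqueness established in the preceding subsection.

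For the monotonicity claims: if $\bX^0 \in \bbL$, then by definition $\bX^0 \preceq F(\bX^0) = \bX^1$, and applying the monotone map $F$ (recall \eqref{eq:monotonicity}) inductively yields $\bX^{n-1} \preceq \bX^n$ for all $\nin$, so $(\bX^n; \, \nin)$ is nondecreasing. Symmetrically, if $\bX^0 \in \bbU$ then $\bX^1 = F(\bX^0) \preceq \bX^0$, and induction through $F$ gives $\bX^n \preceq \bX^{n-1}$, so $(\bX^n; \, \nin)$ is nonincreasing.

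For the convergence statement, I would first control $\bX^1 = F(\bX^0)$ from above and below. From above, the uniform bound \eqref{eq:unifrom_domination_above} gives $\bX^1 \preceq \bB$; since $\bB \in \bbU$, the sequence $F^{\circ n}(\bB)$ is precisely the nonincreasing sequence of Lemma~\ref{lem:maximal_fixed_point}, converging to the maximal fixed point, which by uniqueness is $\bX^*$. From below, I would use that $X^1_i = F_i(\bX^0) \in \defppi$ is strictly positive definite on $\hili$ for each $\iii$: with $\bZ = (\alpha \proji; \, \iii)$ as in Lemma~\ref{lem:lower_bound}, letting $\lambda_i > 0$ be the smallest eigenvalue of $X^1_i$ on $\hili$ and choosing $r \in (0,1]$ with $r \alpha \leq \min_i \lambda_i$, one gets $r Z_i = r\alpha \proji \preceq X^1_i$ for all $i$ (both sides vanish off $\hili$), i.e.\ $r\bZ \preceq \bX^1$. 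By Lemma~\ref{lem:lower_bound} we have $r\bZ \in \bbL$, so $F^{\circ n}(r\bZ)$ is nondecreasing and, by Lemma~\ref{lem:sandwich_fixed_points}, converges to a fixed point, which by uniqueness is again $\bX^*$.

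It then remains to combine the bounds. Applying $F^{\circ n}$ to the chain $r\bZ \preceq \bX^1 \preceq \bB$ and using monotonicity gives
\[
F^{\circ n}(r\bZ) \preceq F^{\circ n}(\bX^1) = \bX^{n+1} \preceq F^{\circ n}(\bB), \quad \nin,
\]
where the middle equality uses $F^{\circ n}(F(\bX^0)) = F^{\circ (n+1)}(\bX^0)$. Since both outer sequences converge to $\bX^*$, letting $n \to \infty$ forces $\bX^{n+1} \to \bX^*$, and hence $\limn \bX^n = \bX^*$. The only genuinely delicate point I anticipate is producing the lower bound $r\bZ \in \bbL$ beneath $\bX^1$; this is exactly where the strict positive-definiteness of the image $F_i(\bX^0)$ on $\hili$ is needed, as it provides a uniform positive floor on each coordinate subspace under which a sufficiently small multiple of $\bZ$ can be slipped.
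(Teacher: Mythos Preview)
Your proof is correct and follows essentially the same sandwiching argument as the paper. The only minor variation is that you trap $\bX^1 = F(\bX^0)$ between $r\bZ \in \bbL$ and $\bB \in \bbU$, whereas the paper traps $\bX^0$ directly between $\hat{r}\bZ$ and $\hat{t}\bB$ for suitably chosen $\hat{r} \in (0,1]$ and $\hat{t} \in [1,\infty)$; both routes conclude via monotonicity of $F$ and uniqueness of the fixed point.
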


\begin{proof}
If $\bX^0\in\bbL$, the inequality $\bX^0 \preceq F(\bX^0) = \bX^1$ and the monotonicity of $F$ of the form~\eqref{eq:monotonicity} show by induction that $(\bX^n; \, \nin)$ is nondecreasing. Similarly, if $\bX^0\in\bbU$, the inequality $\bX^1 = F(\bX^0) \preceq \bX^0$ and the monotonicity of $F$ show that $(\bX^n; \, \nin)$ is nonincreasing.

Given an arbitrary $\bX^0\in\pdefppi$, recall that Lemma~\ref{lem:upper_bounds_iter} implies that $t \bB\in \bbU$ for all $t\in[1,\infty)$ and that Lemma~\ref{lem:lower_bound} guarantees the existence of $\bZ\in\pdefppi$, such that $\bZ\preceq \bB$ and $r\bZ\in\bbL$ for all $r\in(0,1]$. Pick $\hat{r}\in(0,1]$ sufficiently small and $\hat{t}\in[1,\infty)$ sufficiently large such that
\[
\bW^0 \dfn \hat{r} \bZ \preceq \bX^0 \preceq \hat{t} \bB =: \bY^0
\]
holds. Then, define the sequences $(\bW^n; \, \nin)$ and $(\bY^n; \, \nin)$ by iteration via $F$, that is $\bW^n:=\bF(\bW^{n-1})$ and $\bY^n:=\bF(\bY^{n-1})$, for each $\nin$. Since $\bW^0 \preceq \bX^0 \preceq \bY^0$, monotonicity of $F$ and induction gives $\bW^{n} \preceq \bX^{n} \preceq \bY^{n}$, for each $\nin$. Also, since $\bW^0\in\bbL$ and $\bY^0\in\bbU$, the sequence $(\bW^n; \, \nin)$ is nondecreasing and, in fact, bounded above by $\bY^0$, while the sequence $(\bY^n; \, \nin)$ is nonincreasing and bounded below by $\bW^0$. It follows that both sequences have limits $\bW^{\infty}$ and $\bY^\infty$, respectively, with $\bW^\infty \preceq \bY^\infty$. Continuity of $F$ gives that $\bW^\infty = F(\bW^\infty)$ and $\bY^\infty = F(\bY^\infty)$, exactly as in the proof of Lemma~\ref{lem:maximal_fixed_point}. By uniqueness of the solution to~\eqref{eq:fixed_point}, it follows that $\bW^\infty = \bX^* = \bY^\infty$, from which it further follows that $\limn \bX^n = \bX^*$.
\end{proof}

\subsection{Proof of Proposition~\ref{prop:limiting_equil}}\label{appsubsec:proof_of_prop_limiting_equil}

The main input of our market model is the traders' covariance matrices, properly scaled with their risk tolerance coefficients. In this context, the next result generalises~\cite[Proposition 1, item (iv)]{MalRos17} by considering heterogeneous covariance matrices. It implies that equilibrium price impact is monotonically increasing (in positive-semidefinite order) with respect to these covariance matrices.

\begin{lem}\label{lem:equilibrium_monotonicity}
Let $\bB^1 = (B^1_i; \, \iii)\in\pdefppi$ and $\bB^2 = (B^2_i; \, \iii)\in \pdefppi$ be such that $\bB^1 \preceq \bB^2$. If $\bX^1 = (X^1_i; \, \iii)\in\pdefppi$ and $\bX^2 = (X^2_i; \, \iii)\in \pdefppi$ stand for the associated unique equilibria, then $\bX^1 \preceq \bX^2$. 
\end{lem}

\begin{proof}
Set $F^2$ to be as in~\eqref{eq:fixed_point} with $\bB^2$ in place of $\bB$ there, and note that
\[
X^1_i = ((B^1_i)^{-\hili} + \proji (X^1_{-i})^{-1}\proji )^{-\hili} \preceq ((B^2_i)^{-\hili} +\proji  (X^1_{-i})^{-1}\proji )^{-\hili} = F^2_i (\bX^1), \quad \iii,
\]
which shows that $\bX^1 \preceq F^2 (\bX^1)$. Then, statement (1) of Lemma~\ref{lem:sandwich_fixed_points} and uniqueness imply that $\bX^1 \preceq \bX^2$.
\end{proof}

We are now in position to complete the proof of Proposition~\ref{prop:limiting_equil}. By monotonicity from Lemma~\ref{lem:equilibrium_monotonicity} and the nondecreasing assumption of $(B_0^n; \, \nin)$, we have that $(\bX^n; \, \nin)$ is also nondecreasing in $\pdefppi$. Furthermore, from Lemma~\ref{lem:upper_bounds_iter} we have that 
\[
X_i^n \preceq B_i, \quad \forall \iii\setminus \set{0}
\]
and also that 
\[
X_0^n \preceq X^n_{-0} \preceq \sum_{\iii \setminus \set{0}} B_i.
\]
It follows that $(\bX^n; \, \nin)$ has a monotone limit $\bX^\infty\in\pdefppi$. Since $\limn (B_0^n)^{- \mathcal{X}_0} = 0$, condition~\eqref{eq:Nash_impact} in the limit gives that
\[
X_0^\infty = \limn  X_0^n  = \limn \pare{(B^n_0)^{-\hil_0} + \proj_0 (X^n_{-0})^{-1} \proj_0}^{-\hil_0} = \pare{\proj_0 (X^\infty_{-0})^{-1} \proj_0}^{-\hil_0}.
\] 
The same limiting argument shows that
\[
X^\infty_i = \pare{B_i^{-\hili} + \proji \pare{X^\infty_{-i}}^{-1} \proji}^{-\hili}, \quad \iii	
\]

\subsection{Proof of Proposition~\ref{prop:risk_neutral}}\label{appsubsec:proof_prop:risk_neutral}
Let us first consider the competitive market. The demand function for each trader $\iii$ is given as the solution of the problem~\eqref{eq:foc} when $\Lambda_i=0$ for each $\iii$. Since $B_0 = \delta_0 C_0^{-1}$ from~\eqref{eq:CARA_parameters}, equations~\eqref{eq:competitive_price} and~\eqref{eq:competitive_allocation} give that
\[
	\lim_{\delta_0\rightarrow\infty}\cop = g_0; \qquad \lim_{\delta_0\rightarrow\infty}\coq_0=-\lim_{\delta_0\rightarrow\infty}\sum_{\iii\setminus\{0\}}\coq_i=\sum_{\iii\setminus\{0\}}\delta_i B_i (g_0-g_i).
\]
It then follows that, as $\delta_0 \rightarrow\infty$,
\[
	U_0 (\coq_0) -\inner{\coq_0}{\cop} - u_0 =\inner{\coq_0}{g_0-\cop}-\frac{1}{2\delta_0}\inner{\coq_0}{C_0\coq_0}\rightarrow 0.
\]

On the other hand, we get from Proposition~\ref{prop:limiting_equil} and the assumption $K_0 = K$ that
\[
X^\infty_0 = \pare{\pare{X^\infty_{-0}}^{-1}}^{-1} = X^\infty_{-0}.	
\]  
But then $X^\infty_{-i} = 2 X^\infty_{-0} - X^\infty_i$ holds for all $\iii\setminus\set{0}$, and hence we obtain that $(X^\infty_i; \, \iii \setminus \{0 \})$ solves the system
\[
\pare{X^\infty_i}^{-\hili} = B_i^{-\hili} + \proji \pare{2 X^\infty_{-0} - X^\infty_i}^{-1} \proji, \quad \iii \setminus \set{0}.	\]
Now, by~\eqref{eq:price_impact_dem},~\eqref{eq:price} and Proposition~\ref{prop:limiting_equil}, from where it particularly follows that $(X_i^*; \, \iii)$ converge as $\delta_0 \to\infty$ and $X_0^* = X_{-0}^*$ in the limit, we obtain 
\[
	\lim_{\delta_0\rightarrow\infty}p^* = \frac{1}{2}g_0 + \frac{1}{2}(X_0^{\infty})^{-1}\sum_{\iii\setminus\{0\}} X_i^{\infty}g_i; \qquad \lim_{\delta_0\rightarrow\infty}q^*_0 = \frac{1}{2}\left(X_0^{\infty} g_0 - \sum_{\iii\setminus\{0\}}X_i^{\infty}g_i\right).
\]
Hence, it follows that
\[
	U_0 (q^*_0) -\inner{q^*_0}{p^*} - u_0 \rightarrow \frac{1}{4}\inner{X_0^{\infty}\left(X_0^{\infty} g_0-\sum_{\iii\setminus\{0\}}X_i^{\infty}g_i\right)}{X_0^{\infty}g_0-\sum_{\iii\setminus\{0\}}X_i^{\infty}g_i}.
\]
Since $X_0^{\infty}$ is positive-definite matrix, the above limit is non-negative, and equal to zero if, and only if, $X_0^{\infty} g_0-\sum_{\iii\setminus\{0\}}X_i^{\infty}g_i=0$; the latter is equivalent to $\lim_{\delta_0\rightarrow\infty} q^*_0 = 0$.

\subsection{Calculations in Remark~\ref{rem:same_prices_competitive}}\label{appsubsec:calc_rem:same_prices_competitive}\
We keep all notation from Remark~\ref{rem:same_prices_competitive}. In restricted participation, $X_i^r = \delta_i C^{-\hili}$ for all $i\in I$. We readily obtain $\pi_c (C X_i^r) \pi_c = \delta_i \pi_c = \pi_c (C X_i^r)$; therefore, $\pi_c (C X_i^r) \zeta = 0$. With $X^r \dfn \sum_{i\in I} X^r_i$, we then have $\pi_c (C X^r) \pi_c = \delta \pi_c = \pi_c (C X^r) $ and, therefore, $\pi_c (C X^r) \zeta = 0$. The last also implies that $\pi_c (C X^r)^{-1} \pi_c = \delta^{-1} \pi_c = \pi_c (C X^r)^{-1}$. (This can be readily seen by computing the inverse.) Therefore,
	\[
		\pi_c (X^r)^{-1} X_i = \pi_c (C X^r)^{-1} C X^r_i = \pi_c (C X^r)^{-1} \pi_c C X^r_i = w^{\text{co}}_i \pi_c.
	\]
This implies prices $\pi_c p^{\text{co,r}} = \sum_{i\in I} w^{\text{co}}_i \pi_c g_i$ for assets in $K_c$.

\subsection{Proof of Theorem~\ref{thm:equal_prices}}\label{appsubsec:proof for prices}\
We only need to show that the given $(X^r_i; \, i\in I)\in\pdefppi$ satisfy the system
\[
	X^r_i = (\pi_i (\delta_i^{-1} C + (X^r_{-i})^{-1}) \pi_i)^{-\hili}; \quad i\in I,	
\]
which is that same as
\[
	(X^r_i)^{-\hili} =  \delta_i^{-1} \pi_i C \pi_i + \pi_i (X^r_{-i})^{-1} \pi_i; \quad i\in I.	
\]
To show this, first note that direct computation involving inverses show that
\[
	\eta_i (X^r_i)^{-\hili} = \pi_c C \pi_c + \zeta_i C \pi_c + \pi_c C \zeta_i + \zeta_i C C^{-\hil_c} C \zeta_i + \eta_i Y_i^{-\hilsubi}.
\]
Using that $\pi_i = \pi_c + \zeta_i$ and the definition of $D$, this gives us
\[
	(X^r_i)^{-\hili} = \eta_i^{-1} (\pi_i C \pi_i - \zeta_i D \zeta_i) + Y_i^{-\hilsubi}
\]
Furthermore, since
\[
	X^r_{-i} = \eta_{-i} C^{-\hil_c} + E Y_{-i} E' - E Y_{-i} - Y_{-i} E' + Y_{-i}
\]
where $Y_{-i}\in\defppsub$, so that $X^r_{-i}$ is invertible, giving, similar to above,
\[
	\eta_{-i} (X^r_{-i})^{-1} = \pi_c C \pi_c + \zeta C \pi_c + \pi_c C \zeta + \zeta C C^{-\hil_c} C \zeta + \eta_{-i} Y_{-i}^{-\hilsub}	
\]
It follows that
\[
	\eta_{-i} \pi_i (X^r_{-i})^{-1} \pi_i = \pi_c C \pi_c + \zeta_i C \pi_c + \pi_c C \zeta_i + \zeta_i C C^{-\hil_c} C \zeta_i + \eta_{-i} \zeta_i Y_{-i}^{-\hilsub} \zeta_i'
\]
again, using $\pi_i = \pi_c + \zeta_i$ and the definition of $D$, this gives us
\[
	\pi_i (X^r_{-i})^{-1} \pi_i = \eta_{-i}^{-1} (\pi_i C \pi_i -\zeta_i D \zeta_i) + \zeta_i Y_i^{-\hilsubi}\zeta_i.
\]
Since $\delta_i^{-1} + \eta_{-i}^{-1} = \eta_i^{-1}$ holds for $i\in I$, we have
\begin{align*}
	\delta_i^{-1} \pi_i C \pi_i +  \pi_i (X^r_{-i})^{-1} \pi_i &=  \eta_i^{-1} \pi_i C \pi_i - \eta_{-i}^{-1} \zeta_i D \zeta_i + \zeta_i Y_i^{-\hilsubi} \zeta_i \\
	&= (X^r_i)^{-\hili} - Y_i^{-\hilsubi} + \eta_i^{-1} \zeta_i D \zeta_i  - \eta_{-i}^{-1} \zeta_i D \zeta_i + \zeta_i Y_i^{-\hilsubi} \zeta_i \\
	&= (X^r_i)^{-\hili} + \delta_i^{-1} \zeta_i D \zeta_i - Y_i^{-\hilsubi} + \zeta_i Y_i^{-\hilsubi} \zeta_i  = (X^r_i)^{-\hili}
\end{align*}
completing the argument.

\subsection{Calculations after Theorem~\ref{thm:equal_prices}}\label{appsubsec:calc_after_thm:equal_prices}

Given the statement of Theorem~\ref{thm:equal_prices}, straightforward computations give $\pi_c (C X_i^r) \pi_c = \eta_i \pi_c = \pi_c (C X_i^r)$ and, therefore, $\pi_c (C X_i^r) \zeta = 0$. The situation now is the same as the calculations in \S\ref{appsubsec:calc_rem:same_prices_competitive}. More precisely, with $X^r \dfn \sum_{i\in I} X^r_i$, we have $\pi_c (C X^r) \pi_c = \eta \pi_c = \pi_c (C X^r) $ and, therefore, $\pi_c (C X^r) \zeta = 0$; the latter also implies that $\pi_c (C X^r)^{-1} \pi_c = \eta^{-1} \pi_c = \pi_c (C X^r)^{-1}$, by computing the inverse. Therefore,
\[
	\pi_c (X^r)^{-1} X_i = \pi_c (C X^r)^{-1} C X^r_i = \pi_c (C X^r)^{-1} \pi_c C X^r_i = w_i \pi_c,
\]
implying prices $\pi_c p^r = \sum_{i\in I} w_i \pi_c g_i$ for assets in $K_c$.

\subsection{Proof of Proposition~\ref{prop:monotonicity}}\label{appsubsec:proof_of_prop:monotonicity}

The proof is based on an indicative counterexample. For this, we first need the following lemma. 
\begin{lem}\label{lem:charact_full}
In full participation Nash equilibrium, with $X^* \dfn \sum_{j\in I} X_j^*$, it holds that
	\[
	X^*_i = B_i + \frac{1}{2} X^* - (B_i)^{1/2} \pare{\id + \frac{1}{4} \pare{(B_i)^{-1/2} X^* (B_i)^{-1/2}}^2 }^{1/2} (B_i)^{1/2} , \quad \iii.
	\]	 
\end{lem}

\begin{proof}
	Recall that
	\[
	(X^*_i)^{-1} = (B_i)^{-1} + (X^*_{-i})^{-1}, \quad \iii,
	\]
	Define the matrices
	\[
	P_i =  \frac{1}{2} (B_i)^{-1/2} X^*_i (B_i)^{-1/2}, \quad T_i = \frac{1}{2} (B_i)^{-1/2} X^* (B_i)^{-1/2}, \quad \iii,
	\]
	and note that $P_i\in\defpp$, $T_i\in\defpp$ satisfy $2 \id \prec P_i^{-1}$ (because $X^*_i \prec B_i$), $P_i \prec T_i$ (because $X^*_i \prec X^*$), and that
	\[
	P^{-1}_i = 2 \id + (T_i - P_i)^{-1}, \quad \iii. 
	\]
	Let $P_i = Q_i^\top L_i Q_i$ be the decomposition of $P_i\in\defpp$, where $Q_i$ is orthonormal and $L_i$ is diagonal, for all $\iii$. Then, note that
	\[
	T_i = P_i + (P_i^{-1} - 2 \id)^{-1} = Q_i^\top \pare{ L_i + (L^{-1}_i - 2\id)^{-1}} Q_i, \quad \iii. 
	\]
	which implies that also $T_i = Q_i^\top M_i Q_i$ holds for a diagonal $M_i$ (the decomposition of $T_i$ is with the same $Q_i$ as that of $P_i$), and, therefore, that
	\[
	L^{-1}_i = 2 \id + (M_i - L_i)^{-1}, \quad \iii. 
	\]
	In other words, simple algebra (all matrices are diagonal) gives
	\[
	L_i^2 - (\id + M_i) L_i  + (1/2) M_i = 0, \quad \iii. 
	\]
	For concreteness, set $L_i = \diag (l_i)$ and $M_i = \diag (m_i)$; then, $0 < l_i(k) < m_i(k)$ for $k\in K$, and
	\[
	l_i^2(k) - (1 + m_i(k)) l_i(k)  + (1/2) m_i(k) = 0, \quad \iii, \quad k\in K. 
	\]
	The above quadratic equation (in $l_i(k)$) has in general the two roots
	\[
	\frac{1 + m_i(k) \pm \sqrt{1 + m_i(k)^2}}{2}, \quad \iii, \quad k\in K;
	\]
	however, the one with the ``$+$'' sign gives a value strictly greater than $m_i(k)$, which the one with the ``$-$'' sign gives a value lying in the interval $(0, m_i(k))$. It follows that
	\[
	2 l_i(k) =  1 + m_i(k) - \sqrt{1 + m_i(k)^2}, \quad \iii, \quad k\in K.
	\]
	Given that both $L_i$ and $M_i$ are diagonal, one may write this in matrix notation as
	\[
	2 L_i = \id + M_i - \pare{ I + M^2_i}^{1/2}, \quad \iii.
	\]
	Multiplying the above equation with $Q_i^\top$ from the left and $Q_i$ from the right, we obtain that
	\[
	2 P_i = \id + T_i - \pare{I + K^2_i}^{1/2}, \quad \iii.
	\]
	Recalling the definitions of $P_i$ and $T_i$, and multiplying the above equation with $(B_i)^{1/2}$ from both the left and the right, we obtain
	\[
	X^*_i = B_i + \frac{1}{2} X^* - (B_i)^{1/2} \pare{ \id + \frac{1}{4} \pare{(B_i)^{-1/2} X^* (B_i)^{-1/2}}^2}^{1/2} (B_i)^{1/2}, \quad \iii.
	\]
	This concludes the proof of Lemma~\ref{lem:charact_full}.
\end{proof}

We continue with the counterexample that will establish Proposition~\ref{prop:monotonicity}. Consider a full participation setting with three traders $I = \{0,1,2\}$ and two assets. As a baseline ``0'' model, and using superscripts to denote quantities under the model we are considering, assume that $B^0_0 = \diag(1, 2/3) = B^0_1$, while $B^0_2 = \diag(1, 12/5)$. Then, in equilibrium we have $X_0^0 = \diag(1/2, 2/5) = X_1^0$, while $X^0_2 = \diag(1/2, 3/5)$.

We now consider ``$\epsilon$'' model perturbations. In all models, trader 0 remains unchanged ($B_0^\epsilon = B_0^0$ for $\epsilon > 0$), and we shall construct equilibrium such that $X^\epsilon \equiv X^\epsilon_0 + X^\epsilon_1 + X^\epsilon_2$ is always equal to $X^0 = \diag(3/2, 7/5)$ for all sufficiently small $\epsilon > 0$. (We shall see how to accommodate this last part.) In this case, Lemma~\ref{lem:charact_full} implies that $X_0^\epsilon = X_0^0$ for $\epsilon > 0$. For trader 1, we consider
\[
	B_1^\epsilon = B_1^0 + \epsilon B_1^0 \pare{ \begin{matrix}
	1 & 1 - \epsilon \\ 
	1 - \epsilon & 1
	\end{matrix} } B_1^0 = B_1^0 + \epsilon B_1^0 C B_1^0 + o(\epsilon); \quad C := \pare{ \begin{matrix}
	1 & 1 \\ 
	1 & 1
	\end{matrix}},
\]
where in this particular formula (but not the ones below) the error term $o(\epsilon)$ equals exactly
\[
	\epsilon^2
	\pare{ \begin{matrix}
	0 & -2/3 \\ 
	-2/3 & 0
	\end{matrix}}.
\]
First of all, note that $B^\epsilon_0 \prec B^\epsilon_1$ holds for $0<\epsilon < 2$, since then the matrix
\[
	\pare{ \begin{matrix}
		1 & 1 - \epsilon \\ 
		1 - \epsilon & 1
		\end{matrix} }
\]
is strictly positive definite. Secondly, and since we aim at keeping $X^\epsilon$ constant in $\epsilon > 0$, Lemma~\ref{lem:charact_full} implies that we will have $X^\epsilon_1 = X^0_1 + \epsilon \Delta X^0_1 + o (\epsilon)$, for a matrix $\Delta X^0_1$ to be determined. Again, using the fact that we want to keep $X^\epsilon = X^0$ for $\epsilon > 0$, differentiating both sides the equality $(X_1^\epsilon)^{-1} = (B_1^\epsilon)^{-1} + (X^0 - X_1^\epsilon)^{-1}$ with respect to $\epsilon$ and taking the limit as $\epsilon\rightarrow 0$, we obtain
\[
	(X_1^0)^{-1} \Delta X^0_1 (X_1^0)^{-1} = (B_1^0)^{-1} (B_1^0 C B_1^0) (B_1^0)^{-1} - (X^0 - X_1^0)^{-1} \Delta X^0_1 (X^0 - X_1^0)^{-1}.
\]
Noting that $X^0 - X_1^0 = \id$, we have $\Delta X^0_1 + \diag(2, 5/2) \Delta X^0_1 \diag(2, 5/2) = C$. Solving for $\Delta X^0_1$, we obtain
\[
	\Delta X^0_1 = \pare{ \begin{matrix}
	1/5 & 1/6 \\ 
	1/6 & 4/29
	\end{matrix} }.
\]
For small $\epsilon > 0$, $X^\epsilon_0 \prec X^\epsilon_1$, which is equivalent to $\Lambda^\epsilon_0 \prec \Lambda^\epsilon_1$, is equivalent to $\Delta X^0_1 \succ 0$; however, the determinant of $\Delta X^0_1$ equals $4 / 145 - 1 / 36 = - 1 / 5220 < 0$. It follows that $\Lambda^\epsilon_0 \prec \Lambda^\epsilon_1$ fails in this case for small $\epsilon > 0$, which is exactly the context of Proposition~\ref{prop:monotonicity}. 

For completeness, it has to be mentioned how to keep $X^\epsilon$ constant here, so that the previous calculations are valid. With $X_0^\epsilon$ and $X_1^\epsilon$ defined as previously, we set
\[
	B^\epsilon_2 = \pare{(X^0 - X^\epsilon_0 - X^\epsilon_1)^{-1} - (X^\epsilon_0 + X^\epsilon_1)^{-1}}^{-1}; 
\]
for small $\epsilon > 0$, this will be a positive definite matrix. Then, by Theorem~\ref{thm:nash}, the unique Nash equilibrium will be such that $X^\epsilon = X^0$ for all small enough $\epsilon > 0$.

\subsection{Equilibrium in the setting of \S\ref{subsubsec:example_full_part}}\label{appsubsec:equilibrium_example_full_part}

We keep all notation from \S\ref{subsec:example}, and in particular \S\ref{subsubsec:example_full_part}. We first diagonalise $C_0$ as 
\[
C_0 = V \left(\begin{array}{c c}
	1 + \rho & 0 \\
	0 & 1 - \rho
	\end{array} \right) V, \quad \text{where} \quad
V = \frac{1}{\sqrt{2}} \left(\begin{array}{c c}
		1 & 1 \\
		1 & -1
		\end{array} \right).
\]
Note that $V$ is symmetric and unitary: $V^2 = \id$, where $\id$ stands for the $2 \times 2$ identity matrix. Therefore, $V$ also trivially diagonalises the identity matrix $C_1$. It then follows that $V X_1^f V$ will be also diagonal. To ease notation below, define the matrix
\[
D^\rho	\dfn \left(\begin{array}{c c}
		1/ (1 + \rho) & 0 \\
		0 & 1/(1 - \rho)
		\end{array} \right),
\]
so that $C_0^{-1} = V D^\rho V$, and for functions $h : (0 ,\infty) \to \Real$ write $h(D^\rho)$ for the $2 \times 2$ diagonal matrix with diagonal entries $(h(1/ (1 + \rho)), h(1/ (1 - \rho)))$. Then, one solves the matrix equation~\eqref{eq:full_part_proof2} for $X_1^f$ to obtain 
\[
X^f_1 = V h_1 (D^\rho) V, \quad \text{where} \quad h_1(x) = \frac{1}{4} - \frac{5}{12} x  + \sqrt{ \left( \frac{1}{4} - \frac{5}{12} x \right)^2 + \frac{2}{3} x }, \quad x\in(0,\infty).
\]
It then also follows from~\eqref{eq:full_part_proof1} that
\[
X^f_0 = V h_0 (D^\rho) V, \quad \text{where} \quad h_0(x) = (x^{-1} + (3 h_1(x))^{-1})^{-1} = \frac{3 x h_1(x)}{x + 3 h_1(x)}, \quad x\in(0,\infty).
\]
Hence, the price impacts at the full participation equilibrium are
\[
	\Lambda^f_0 = V \tilde{h}_0(D^\rho)V,\qquad\Lambda^f_1 = V \tilde{h}_1(D^\rho)V,
\]
where 
\[
	\tilde{h}_0(x)=\frac{1}{3h_1(x)}  \quad \text{and} \quad \tilde{h}_1(x) = \frac{1}{h_0(x)+2h_1(x)} = \frac{x+3h_1(x)}{h_1(x)(5x+6h_1(x))}, \quad x\in(0,\infty).
\]

Prices at full-participation equilibrium are given by
\[
p^f = (X^f_0 + 3 X^f_1)^{-1} X^f_0 g_0 = V (h_0(D^\rho) + 3 h_1(D^\rho))^{-1} h_0(D^\rho) V g_0.
\]
We directly obtain from~\eqref{eq:price_impact_dem} that traders $i\in I_{-0}$ have equilibrium position $q^f_i = q^f_1$, where
\[
q^f_1 = - X^f_1 p^f = - V h_1(D^\rho) (h_0(D^\rho) + 3 h_1(D^\rho))^{-1} h_0(D^\rho) V g_0 = V \eta_1(D^\rho)  V g_0,
\] 
and
\[
\eta_1(x) = - h_1(x) (h_0(x) + 3 h_1(x))^{-1} h_0(x) = - \frac{h_1(x)}{1 + 3 h_1(x) / h_0(x)} = - \frac{x h_1(x)}{2 x + 3 h_1(x)}, \quad x\in(0,\infty).
\]

We then calculate that the \emph{aggregate} utility at Nash equilibrium in full participation equals
\[
\sum_{i\in I} \pare{\inner{q^f_i}{g_i} - \frac{1}{2 \delta_i}\inner{q^f_i}{C_i q^f_i}} =\inner{q^f_0}{g_0} - \frac{1}{2}\inner{q^f_0}{C_0 q^f_0} -3 \frac{1}{2}\inner{q^f_1}{ q^f_1} = g_0' V \kappa (D^\rho)  V g_0,
\]
where
\[
\kappa(x) = - 3 \eta_1(x) - \frac{9}{2}  \frac{\eta^2_1(x)}{x} -  \frac{3}{2} \eta^2_1(x), \quad x\in(0,\infty).
\]

\subsection{Proof of Lemma~\ref{lem:full_part_1}}\label{appsubsec:proof_lem_full_part_1}

Retain all notation of \S\ref{appsubsec:equilibrium_example_full_part}. Also, recall that
\[
	\frac{\gamma_1^2}{3} = \frac{1}{3} g'_0 \left(\begin{array}{c c}
		1 & 0 \\
		0 & 0
		\end{array} \right) g_0
\]
is the aggregate utility in restricted participation equilibrium; therefore, the difference between aggregate utilities in full and restricted participation equilibrium equals
\begin{equation}\label{eq:utility_difference}
g_0' V \kappa (D^\rho)  V g_0 - \frac{1}{3} g'_0 \left(\begin{array}{c c}
	1 & 0 \\
	0 & 0
	\end{array} \right) g_0 = g_0' V \left( \kappa (D^\rho) - \frac{1}{6} \left(\begin{array}{c c}
1 & 1 \\
1 & 1
\end{array} \right) \right) V g_0.
\end{equation}
To see whether this quadratic form may become negative, we check whether the smallest eigenvalue of the matrix
\[
M \equiv M^\rho :=\kappa (D^\rho) - \frac{1}{6} \left(\begin{array}{c c}
1 & 1 \\
1 & 1
\end{array} \right) =
\left(\begin{array}{c c}
	\kappa(1 / (1+\rho)) - 1/6& -1/6 \\
	-1/6 & \kappa(1 / (1-\rho)) - 1/6
	\end{array} \right)
\]
becomes negative. If this happens, we may choose $g_0$ so that $V g_0$ equals the eigenvector corresponding to this minimal eigenvalue, and the result of \eqref{eq:utility_difference} will be negative. The smallest of the two eigenvalues of the matrix $M$ is given by
\[
e_M(\rho):=\frac{\kappa((1+\rho)^{-1}) + \kappa((1-\rho)^{-1} ) -1/3 - \sqrt{(\kappa((1+\rho)^{-1}) - \kappa((1-\rho)^{-1}))^2 + 1/9} }{2}.
\]
In order to see analytically that this function may become negative, we analyse how $e_M(\rho)$ behaves when $\rho \approx 0$, which involves the behaviour of $\kappa(x)$ when $x \approx 1$. The function $\kappa$ depends on the function $\eta_1$, which in turn depends on the function $h_1$; all these functions involve algebraic expressions, and therefore derivatives can be explicitly. Straightforward (but lengthy and uninteresting) computations give $e_M(0) = 0$, $e_M'(0) = 0$, and $e_M''(0) = - 1/5000 < 0$. It follows that there exists $r \in (0, 1)$ such that, for $|\rho| < r$ and $\rho \neq 0$, we have $e_M(\rho) < 0$. In fact, and since $e_M$ is clearly an even function of $\rho \in (-1, 1)$, one may solve algebraically the equation $e_M(r) = 0$ for $r \in (0,1)$, obtaining a unique solution $r = (2/3) \sqrt{\sqrt{113} -9} \approx 0.851$. As a visual indication of the previous, Figure \ref{fig} presents a plot of $e_M(\rho)$ against $\rho \in (-1, 1)$.

\begin{center}
	\begin{figure}
	\includegraphics[scale=1]{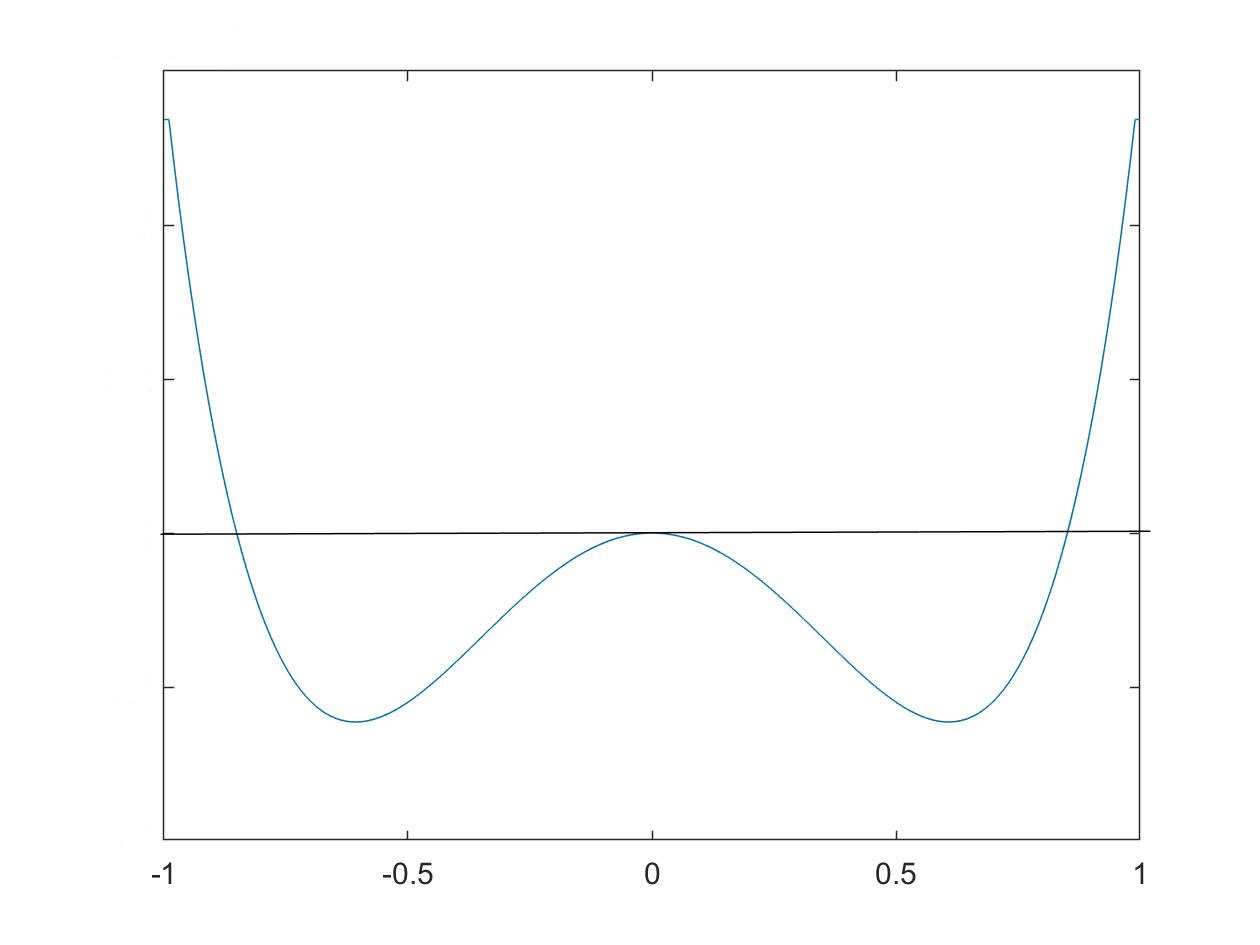}
	  \caption{The smallest eigenvalue of matrix $M$ as a function of trader 0's correlation parameter $\rho$. In the areas where this function is negative, there are initial risk exposures of trader 0 that yield higher welfare for the restricted market setting compared to the full participation.}\label{fig}
	\end{figure}
\end{center}

\subsection{Proof of Lemma~\ref{lem:full_part_2}}\label{appsubsec:proof_lem_full_part_2} 

Based on the discussion of \S\ref{appsubsec:equilibrium_example_full_part}, the inequality $\Lambda_0^f \preceq \Lambda_0^r$ is equivalent to $12 (V \Lambda_0^r V - \tilde{h}_0(D^\rho))$ being positive definite. For this, we compute
	\[
		12 V \Lambda^r_0 V = 6 \left(\begin{array}{c c}
			1 & 1 \\
			1 & -1
			\end{array} \right)
			\left(\begin{array}{c c}
	1/2 &0 \\
	0 &2/3
	\end{array} \right)  \left(\begin{array}{c c}
			1 & 1 \\
			1 & -1
			\end{array} \right) =
			\left(\begin{array}{c c}
			7 & -1 \\
			-1 & 7
			\end{array} \right).
	\]
	Furthermore, since
	\[
		12 \tilde{h}_0(x) = \frac{4}{1/4 - 5 x / 12 + \sqrt{ \left( 1/4 -  5 x / 12 \right)^2 + 2 x / 3}} = \frac{- 3/x + 5 + \sqrt{ \left( 3/x -  5 \right)^2 + 96 / x}}{2},
	\]
	with $\bar{h}_0 (x) = -5 - (3/2) x + \sqrt{ \left( 1 - (3/2) x \right)^2 +  24 (1 +x)}$ we have $6 + \bar{h}_0(x) = 12 \tilde{h}_0 (1 / (1+x))$, so that
	\[
		12 (V \Lambda^r_0 V - \tilde{h}_0 (D^\rho) ) = 
			\left(\begin{array}{c c}
			1 - \bar{h}_0(\rho)& -1 \\
			-1 & 1 - \bar{h}_0(-\rho)
			\end{array} \right).
	\]
	The above matrix will be positive definite if its trace is positive and its determinant nonnegative. It can be checked that $\bar{h}_0$ is concave, which implies that the trace $2 - \bar{h}_0(\rho) - \bar{h}_0(-\rho)$ is convex and (obviously) even in $\rho$; therefore it is always greater or equal than $2 - 2 \bar{h}_0 (0) = 2 > 0$. Furthermore, the determinant equals 
	\[
		(1 - \bar{h}_0(\rho))(1 - \bar{h}_0(-\rho)) - 1 = \bar{h}_0(\rho) \bar{h}_0(-\rho) - \bar{h}_0(\rho) - \bar{h}_0(-\rho);
	\]
	this function of $\rho$ is also even, and can be checked to be convex; therefore, it always dominates $(1 - \bar{h}_0(0))^2 - 1 = 0$.

\bibliographystyle{plainnat}
\bibliography{references}
\end{document}